\documentclass[a4paper]{article}

\usepackage{amsmath,amssymb,amsthm}
\usepackage{float}
\usepackage{graphicx}
\usepackage{hyperref}
\usepackage[capitalize,nameinlink]{cleveref}


\newcommand{\Z}{\mathbb{Z}}
\newcommand{\Zp}{\mathbb{Z^+}}
\newcommand{\Q}{\mathbb{Q}}
\newcommand{\s}{\ \!}

\newcommand{\keywords}[1]{{\bf Keywords: }{#1}}
\newcommand{\subclass}[1]{{\bf Mathematics Subject Classification (2020): }{#1}}

\newtheorem{thm}{theorem}[section]
\newtheorem{remark}[thm]{Remark}
\newtheorem{example}[thm]{Example}
\newtheorem{definition}[thm]{Definition}
\newtheorem{theorem}[thm]{Theorem}
\newtheorem{proposition}[thm]{Proposition}
\newtheorem{corollary}[thm]{Corollary}
\newtheorem{lemma}[thm]{Lemma}

\setlength{\oddsidemargin}{0pt}
\setlength{\marginparwidth}{0pt}
\setlength{\hoffset}{0pt}
\setlength{\textwidth}{159.2mm} 
\setlength{\parskip}{.5\baselineskip}
\setlength\parindent{0pt}

\crefname{equation}{Equation}{Equations}
\Crefname{equation}{Equation}{Equations}
\crefname{figure}{Figure}{Figures}
\Crefname{figure}{Figure}{Figures}


\ifpdf
\hypersetup{
  pdftitle={Euclidean Affine Functions and Applications to Calendar Algorithms},
  pdfauthor={Cassio Neri and Lorenz Schneider}
}
\fi

\title{
  Euclidean Affine Functions and Applications to Calendar Algorithms
  \thanks{Submitted to the editors \today}
}

\author{
  Cassio Neri
  \thanks{JPMorgan Chase \& Co, \texttt{cassio.neri@jpmorgan.com} (Opinions
  expressed in this paper are those of the authors, and do not necessarily
  reflect the view of JP Morgan.)}
  \and
  Lorenz Schneider
  \thanks{EMLYON Business School, \texttt{schneider@em-lyon.com}.}
}


\begin{document}

\maketitle

\begin{abstract}
We study properties of Euclidean affine functions (EAFs), namely those of the form
$f(r) = (\alpha\cdot r + \beta)/\delta$, and their closely related
expression $\mathring{f}(r) = (\alpha\cdot r + \beta)\%\delta$, where $r$, $\alpha$, $\beta$ and
$\delta$ are integers, and where $/$ and $\%$ respectively denote the quotient and remainder of
Euclidean division. We derive algebraic relations and numerical approximations
that are important for the efficient evaluation of these expressions in modern CPUs.
Since simple division and remainder are particular cases of EAFs (when $\alpha = 1$ and $\beta
= 0$), the optimisations proposed in this paper can also be appplied to them. Such
expressions appear in some of the most common tasks in any computer system, such as
printing numbers, times and dates. We use calendar calculations as the main
application example because it is richer with respect to the number of
EAFs employed. Specifically, the main application
presented in this article relates to Gregorian calendar algorithms. We will show
how they can be implemented substantially more efficiently than is currently
the case in widely used C, C++, C\# and Java open source libraries. Gains in speed of a
factor of two or more are common.
\end{abstract}

\keywords{Euclidean Affine Functions, Integer Division, Calendar Algorithms}

\subclass{65Y04, 68R01}

\section{Introduction}
\label{sec:Introduction}

Divisions by constants appear in some of the most common tasks performed by
software systems, including printing decimal numbers (division by powers of $10$) and
working with times (division by $24$, $60$ and $3600$). Since division is the slowest of
the four basic arithmetical operations, various authors \cite{Alverson1991,
CavagninoWerbrouck2007,GranlundMontgomery1994,MagenheimerPetersPettisZuras1987,
Robison2005} have proposed strength reduction optimisations ({\em i.e.,}
the replacement of instructions with alternatives that are mathematically equivalent
but faster) for integer divisions when divisors are constants known by the
compiler. The algorithms proposed by Granlund and Montgomery
\cite{GranlundMontgomery1994} have been implemented by major compilers.

A typical example that will appear later in this article is the division
\begin{equation*}
n / 1461 = 2\s939\s745\cdot n/2^{32}, \quad \forall n\in[0, 28\s825\s529[.
\end{equation*}
The problem involves finding, for the given divisor $1461$,
the multiplier 2\s939\s745, the exponent $32$, and the boundary $28\s825\s529$ of the interval
in which equality holds. Clearly, there are many possible solutions.

Remainder calculation is a closely related problem, also arising in the tasks
mentioned above, although the cited works do not consider optimisations for this operation.
Compilers are content to apply strength reduction to obtain the quotient
$r/\delta$ and the remainder evaluating $r - \delta\cdot(r/\delta)$. For
this reason, \cite{LemireKaserKurz2019, Warren2013} considered the problem of
directly obtaining the remainder without first calculating the quotient.

This paper expands previous works in two ways. Firstly, it considers the more
general setting of Euclidean affine functions (EAFs) $f(r) = (\alpha\cdot r + \beta)/
\delta$ where $\alpha$, $\beta$ and $\delta$ are integer constants and $r$ is an
integer variable. Secondly, it suggests optimisations that are even more
effective in applications where both the quotient and the remainder need to be
evaluated.

We derive EAF-related equalities that provide alternative ways of evaluating
expressions commonly used in applications. (For instance, $r - \delta\cdot(r/
\delta)$, which is used to obtain the remainder as explained above.) These equalities underpin
optimisations, other than strength reduction, that take into account aspects of
modern CPUs. Specifically, they foster instruction-level parallelism implemented
by superscalar processors and they profit from the backward compatibility features
that drove the design of the x86\_64 instruction set.

Calendar calculations are a rich application field for our EAF results.
Performed by various software systems, they tackle questions such as: What is the
date today? For how many days do interest rates accrue over the period of a loan?
Your mobile phone no doubt performed a similar calculation while you read this
paragraph.

Our algorithms are substantially faster than those in widely used open source
implementations, as shown by benchmarks against counterparts in glibc
\cite{GlibcMktime, GlibcOfftime} (Linux contains a similar implementation
\cite{LinuxTimeconv}), Boost \cite{BoostGregorianCalendar}, libc++
\cite{LLVMChrono}, .NET \cite{DotNetDateTime} and OpenJDK
\cite{OpenJDKLocalDate} (Android contains the same implementations
\cite{AndroidLocalDate}). Our algorithms are also faster than others found in the academic
literature \cite{Baum1998, FliegelFlandern1968, Hatcher1984, Hatcher1985,
ReingoldDershowitz2018, Richards1998}.

The critical point setting apart implementations of calendar algorithms is how
they deal with non-linearities caused by irregular month and year lengths. Some authors
\cite{LinuxTimeconv, GlibcOfftime, DotNetDateTime} resort to look-up tables,
which can be costly when the L1-level cache is cold. They conduct linear searches on
these tables, entailing branching that can cause stalls in the processor's
execution pipeline. Others \cite{Baum1998, FliegelFlandern1968, Hatcher1984,
Hatcher1985, BoostGregorianCalendar, LLVMChrono, ReingoldDershowitz2018,
Richards1998} tackle the issue entirely through EAFs. Nevertheless, they do not
go far enough and do not use the mathematical properties derived in this paper.
To the best of our knowledge, the most successful attempt at achieving this efficiency is Baum
\cite{Baum1998}. This author provides some explanations, but appears to resort to trial
and error when it comes to pre-calculating certain {\em magic numbers} used in his
algorithm.
We go further by providing a systematic and general framework for such calculations.

Our paper and its main contributions are organised as follows.

\Cref{sec:EAF} introduces EAFs and derives some of
their properties. Although they are common in applications, we were unable to
find any systematic coverage of them. Hatcher \cite{Hatcher1984} and Richards
\cite{Richards1998} appear to be aware of some of the results of \cref{thm:EAF}, but
they do not point to any proof.

\Cref{sec:FastEAF} concerns efficient evaluations of EAFs. \cref{thm:FastEAFUp,%
thm:FastEAFDown} generalise prior-known results on division to EAFs.
\Cref{sec:SpecialCase} revisits those results and brings geometric insights to
the problem. \Cref{thm:FastResidual} concerns the efficient evaluation of residuals:
they are to EAFs what remainders are to division. The optimisation proposed by
\cref{thm:FastResidual} is fundamentally different from other optimisations, both in the present paper and in prior
works, since it does not involve strength reduction. Indeed, this theorem shows
how to break a data dependency present in the instructions currently emitted by
compilers, enabling instruction-level parallelism, as we shall see in
\cref{ex:FastResidualYear}.

\Cref{sec:Calendars} provides a generic mathematical framework for calendars,
which we use to study the Gregorian calendar. Efficient algorithms for this
calendar are derived in \cref{sec:RataDie,sec:RataDieGregorian}.
\Cref{sec:Performance} presents performance analysis.

We finish this introduction by setting the notation used throughout and
recalling some well-known results.

On the totally ordered Euclidean domain of integer numbers $(\Z, +, \cdot,
\le)$, forward slash $/$ and percent $\%$ respectively denote the {\bf quotient} and {\bf
remainder} of Euclidean division. More precisely, given $n,\delta\in\Z$ with
$\delta\ne 0$, there exist unique $q,r\in\Z$, with $0\le r < |\delta|$, such
that $n = q\cdot\delta + r$. Then $n/\delta = q$ and $n\%\delta = r$. (These
concepts, for non-negative operands, match the usual operators \texttt{/} and
\texttt{\%} of many programming languages in the C-family.)

We follow the usual algebraic order of operation rules and, in accordance with
C-style languages, we give $\%$ the same precedence as $\cdot$ and $/$. Hence,
$a\cdot b\%c = (a\cdot b)\%c$, $a\%b\cdot c = (a\%b)\cdot c$ and $a/b\%c = (a/b)
\%c$, $a\%b/c = (a\%b)/c$. Moreover, $a + b\%c = a + (b\%c)$, and $a - b\%c = a -
(b\%c)$. In general, $a\cdot b/c\ne a\cdot(b/c)$ and $(a + b)/c\ne a/c + b/c$.
This indicates that these precedence rules are even more important than when
working in algebraic fields. Nevertheless, we shall drop unnecessary parentheses
henceforth.

The set of non-negative integer numbers is denoted by $\Zp = \{ x\in \Z\ ;\ x\ge
0\}$.

On the totally ordered field of rational numbers $(\Q, +, \cdot, \le)$, the
reciprocal of $\delta\in\Q\setminus\{0\}$ is denoted by $\delta^{-1}$. For $n$,
$\delta\in\Z$ with $\delta\ne 0$, we shall write $n\cdot\delta^{-1}\in\Q$ to
distinguish it from $n/\delta\in\Z$. If the former belongs to $\Z$, then $n\cdot
\delta^{-1} = n/\delta$. This distinction is particularly important when dealing
with inequalities. On the one hand, for $x, y, \delta\in\Z$ with $\delta > 0$,
we have $x\cdot\delta^{-1}\ge y\cdot\delta^{-1}$ if, and only if, $x\ge y$;
and on the another hand $x/\delta\ge y/\delta$ if $x\ge y$, but the converse does not
hold.

For any $x\in\Q$, there exist unique $\left\lfloor x\right\rfloor\in\Z$ and
$\left\lceil x\right\rceil\in\Z$ such that $\left\lfloor x\right\rfloor\le x <
\left\lfloor x\right\rfloor + 1$ and $\left\lceil x\right\rceil - 1 < x \le\left
\lceil x\right\rceil$. For any finite set $X$, the {\bf number of elements} of
$X$ is denoted by $\#X$.

\section{Euclidean affine functions}
\label{sec:EAF}

Multiplication by $60$ converts hours to minutes. Conversely, division by $60$
converts minutes to hours. If the amount of minutes in an hour were variable, these
problems would be more complex. Since months and years have variable numbers of
days, such complexities arise in calendar calculations. Different ways of
tackling this problem appear across implementations with various degrees of
performance. Nevertheless, similarities between calendar calculations and the
previous linear problems are stronger than they might appear, as revealed by our
study of EAFs.

\begin{definition}
\label{def:EAF}
A function $f:\Z\rightarrow\Z$ is a {\bf Euclidean affine function}, {\bf EAF}
for short, if it has the form $f(r) = (\alpha\cdot r + \beta)/\delta$ for all $r
\in\Z$ and fixed $\alpha$, $\beta$, $\delta\in\Z$ with $\delta\ne0$.
\end{definition}

This terminology is ours. The analogues of EAFs in higher dimensions appear in
Discrete Geometry and are called quasi-affine transformations. That area focuses
on periodicity, tiling and other geometric aspects, whereas we are concerned with
efficient calculations. We therefore use the term EAF to distinguish between these two approaches.

\begin{definition}
\label{def:Residual}
Let $f$ be the EAF $f(r) = (\alpha\cdot r + \beta)/\delta$. The {\bf residual
function} of $f$ is $\mathring{f}(r) = (\alpha\cdot r + \beta)\%\delta$ or,
equivalently, $\mathring{f}(r) = \alpha\cdot r + \beta - \delta\cdot f(r)$.
\end{definition}

An important function related to $f(r) = r/\delta$ is its right inverse $\hat{f}
(q) = \delta\cdot q$. If $\delta>0$ and $r\in[\hat{f} (q), \hat{f}(q + 1)[\ =
[\delta\cdot q, \delta\cdot(q + 1)[$, then $f(r) = q$ and $\mathring{f}(r) = r -
\delta\cdot(r/\delta) = r - \hat{f}(f(r)) = r - \hat{f}(q)$. As we shall see,
\cref{thm:EAF} generalises these results and supports the following terminology.

\begin{definition}
\label{def:MRI}
Let $f$ be the EAF $f(r) = (\alpha\cdot r + \beta)/\delta$, with $\delta\ge
\alpha > 0$. The {\bf minimal right inverse} of $f$ is the EAF $\hat{f}(q) =
(\delta\cdot q + \alpha - \beta - 1)/\alpha$.
\end{definition}

\begin{lemma}
\label{lem:Residual}
Let $f$ be the EAF $f(r) = (\alpha\cdot r + \beta)/\delta$, with $\delta > 0$,
and $g:\Z\rightarrow\Z$. For any $q\in\Z$ such that $f(g(q) - 1) < f(g(q))$, we
have:
\begin{equation*}
r\in\Z \quad\text{and}\quad f(r) = f(g(q)) \quad\Longrightarrow\quad
\mathring{f}(r)/\alpha = r - g(q).
\end{equation*}
\end{lemma}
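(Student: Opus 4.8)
The plan is to set $p := g(q)$ and collapse the whole statement onto a single bound, namely $0\le\mathring{f}(p)<\alpha$. First I would write the defining identities of \cref{def:Residual}, $\alpha\cdot r+\beta=\delta\cdot f(r)+\mathring{f}(r)$ and $\alpha\cdot p+\beta=\delta\cdot f(p)+\mathring{f}(p)$. The hypothesis supplies $f(r)=f(p)$, so subtracting the second identity from the first cancels the $\delta\cdot f(\cdot)$ terms and leaves $\mathring{f}(r)=\alpha\cdot(r-p)+\mathring{f}(p)$. This already displays $r-p$ as a candidate quotient of $\mathring{f}(r)$ by $\alpha$, with candidate remainder $\mathring{f}(p)$; by uniqueness of Euclidean division (using $\alpha>0$), the desired $\mathring{f}(r)/\alpha=r-p$ follows as soon as $\mathring{f}(p)$ lies in $[0,\alpha[$.

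So everything reduces to proving $0\le\mathring{f}(p)<\alpha$, and this is where the hypothesis $f(p-1)<f(p)$ enters. The lower bound is immediate, since $\mathring{f}(p)$ is a remainder of division by $\delta>0$. For the upper bound I would re-express $f(p-1)$ through $\mathring{f}(p)$: from $\alpha\cdot(p-1)+\beta=\bigl(\delta\cdot f(p)+\mathring{f}(p)\bigr)-\alpha$ together with the quotient-shift identity $(\delta\cdot k+a)/\delta=k+a/\delta$, I obtain $f(p-1)=f(p)+(\mathring{f}(p)-\alpha)/\delta$. Hence $f(p-1)<f(p)$ is equivalent to $(\mathring{f}(p)-\alpha)/\delta<0$. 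Since $\delta>0$, the Euclidean quotient of an integer by $\delta$ is negative precisely when the integer itself is negative, so $(\mathring{f}(p)-\alpha)/\delta<0$ forces $\mathring{f}(p)-\alpha<0$, that is $\mathring{f}(p)<\alpha$.

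With $0\le\mathring{f}(p)<\alpha$ established, I would conclude by reading $\mathring{f}(r)=\alpha\cdot(r-p)+\mathring{f}(p)$ as exactly the quotient–remainder decomposition of $\mathring{f}(r)$ by $\alpha$; the uniqueness clause of Euclidean division then gives $\mathring{f}(r)/\alpha=r-p=r-g(q)$, as claimed.

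The one genuinely delicate step is the middle one: translating the integer inequality $f(p-1)<f(p)$ into the residual bound $\mathring{f}(p)<\alpha$. It rests on the quotient-shift identity and on the sign behaviour of Euclidean division by a positive $\delta$ (its quotient agrees with the floor of the rational quotient, hence is negative if, and only if, the numerator is). The remaining work is bookkeeping with the two defining equations and the uniqueness of Euclidean division. It is worth flagging that the argument tacitly assumes $\alpha>0$, so that $\mathring{f}(r)/\alpha$ is a bona fide Euclidean quotient with remainder in $[0,\alpha[$; this is consistent with the sign conventions under which the lemma is applied, e.g.\ $\delta\ge\alpha>0$ in \cref{def:MRI}.
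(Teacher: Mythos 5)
Your proof is correct and takes essentially the same approach as the paper's: the identical decomposition $\mathring{f}(r)=\alpha\cdot(r-g(q))+\mathring{f}(g(q))$ and the identical reduction to $0\le\mathring{f}(g(q))<\alpha$, the only cosmetic difference being that you derive the upper bound directly (quotient-shift identity plus the sign of Euclidean quotients by positive $\delta$), while the paper proves the contrapositive by contradiction using monotonicity of division by $\delta$. The $\alpha>0$ caveat you flag is not actually a tacit assumption: since your argument establishes $0\le\mathring{f}(g(q))<\alpha$ from the hypothesis $f(g(q)-1)<f(g(q))$ alone, the positivity of $\alpha$ follows automatically rather than having to be assumed.
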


\begin{proof}
Since $\mathring{f}(r) = \alpha\cdot r + \beta - \delta\cdot f(r)$, we have
$\mathring{f}(g(q)) = \alpha\cdot g(q) + \beta - \delta\cdot f(g(q))$. Hence,
\begin{align*}
\mathring{f}(r)
&= \alpha\cdot r + \beta - \delta\cdot f(r)
 = \alpha\cdot (r - g(q)) + \alpha\cdot g(q) + \beta - \delta\cdot f(g(q)) \\
&= \alpha\cdot (r - g(q)) + \mathring{f}(g(q)).
\end{align*}
We will show that $0\le\mathring{f}(g(q)) < \alpha$ and it will follow that
$\mathring{f}(r)/\alpha = r - g(q)$. By definition of $\mathring{f}$ and $\%$ we
obtain $\mathring{f}(g(q))\ge 0$.
Suppose, by contradiction, that $\mathring{f}(g(q))\ge\alpha$. Then,
\begin{align*}
\alpha\cdot(g(q) - 1) + \beta
&= \alpha\cdot g(q) + \beta - \alpha
 = \delta\cdot f(g(q)) + \mathring{f}(g(q))- \alpha \\
&\ge \delta\cdot f(g(q)).
\end{align*}
Dividing the above by $\delta$ gives $f(g(q) - 1)\ge f(g(q))$, which contradicts
the assumption on $q$.
\end{proof}

\begin{theorem}
\label{thm:EAF}
Let $f$ be the EAF $f(r) = (\alpha\cdot r + \beta)/\delta$ with $\delta\ge\alpha
> 0$. Then, for any $r$ and $q\in\Z$ we have:
\begin{enumerate}
\item\label{it:RInverse} $f(\hat{f}(q)) = q$ and $f(\hat{f}(q) - 1) = q - 1$;

\item\label{it:Interval} $f(r) = q$ if, and only if, $r\in[\hat{f}(q), \hat{f}(q
+ 1)[$;

\item\label{it:Residual} $r\in[\hat{f}(f(r)), \hat{f}(f(r) + 1)[$ and $\mathring
{f}(r)/\alpha = r - \hat{f}(f(r))$.
\end{enumerate}
\end{theorem}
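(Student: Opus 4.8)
The plan is to establish the three parts in sequence, with part~\ref{it:RInverse} carrying essentially all of the arithmetic and parts~\ref{it:Interval} and~\ref{it:Residual} then following by monotonicity and by appeal to \cref{lem:Residual}.

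For part~\ref{it:RInverse} I would compute directly. Setting $s = (\delta\cdot q + \alpha - \beta - 1)\%\alpha$, the definition of Euclidean division gives $\alpha\cdot\hat{f}(q) = \delta\cdot q + \alpha - \beta - 1 - s$ with $0\le s<\alpha$. Substituting this into $f$ collapses the $\beta$ terms and yields
\begin{equation*}
f(\hat{f}(q)) = (\alpha\cdot\hat{f}(q) + \beta)/\delta = (\delta\cdot q + \alpha - 1 - s)/\delta,
\qquad
f(\hat{f}(q) - 1) = (\delta\cdot q - 1 - s)/\delta.
\end{equation*}
The two claimed values are then read off by locating each numerator between consecutive multiples of $\delta$: the first equals $q$ as soon as $0\le\alpha - 1 - s<\delta$, and the second equals $q - 1$ as soon as $-\delta\le -1 - s<0$. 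Both chains of inequalities follow from $0\le s<\alpha$ combined with the standing hypothesis $\delta\ge\alpha$. This is the one place where that hypothesis, and the seemingly arbitrary constant $\alpha - \beta - 1$ in the definition of $\hat{f}$, are genuinely needed, and verifying these two bounds is the crux of the whole theorem.

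For part~\ref{it:Interval} I would first note that $f$ is non-decreasing: the map $r\mapsto\alpha\cdot r + \beta$ is increasing because $\alpha>0$, and Euclidean division by the positive integer $\delta$ preserves $\le$, as recorded in the introduction. Part~\ref{it:RInverse} then exhibits $\hat{f}(q)$ as the least integer mapped to $q$, since $f(\hat{f}(q) - 1) = q - 1 < q = f(\hat{f}(q))$. The equivalence now splits cleanly. If $\hat{f}(q)\le r<\hat{f}(q + 1)$, monotonicity sandwiches $f(r)$ between $f(\hat{f}(q)) = q$ and $f(\hat{f}(q + 1) - 1) = q$, forcing $f(r) = q$. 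Conversely, $r<\hat{f}(q)$ gives $f(r)\le f(\hat{f}(q) - 1) = q - 1$, while $r\ge\hat{f}(q + 1)$ gives $f(r)\ge f(\hat{f}(q + 1)) = q + 1$; in either case $f(r)\ne q$.

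Part~\ref{it:Residual} requires no new work. The interval membership is part~\ref{it:Interval} applied to the tautological identity $f(r) = f(r)$, i.e.\ with $q = f(r)$. For the residual identity I would invoke \cref{lem:Residual} with $g = \hat{f}$ and at $q = f(r)$: its hypothesis $f(\hat{f}(q) - 1)<f(\hat{f}(q))$ is exactly $q - 1<q$ from part~\ref{it:RInverse}, and the required $f(r) = f(\hat{f}(f(r)))$ is again part~\ref{it:RInverse}, so the lemma immediately returns $\mathring{f}(r)/\alpha = r - \hat{f}(f(r))$.
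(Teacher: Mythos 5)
Your proof is correct and takes essentially the same approach as the paper's: part \labelcref{it:RInverse} is the identical computation (the paper writes $n\%\alpha$ where you write $s$, and bounds the numerators against multiples of $\delta$ in exactly the same way), part \labelcref{it:Interval} is the same monotonicity sandwich using \labelcref{it:RInverse} at $q$ and $q+1$, and part \labelcref{it:Residual} invokes \cref{lem:Residual} with $g = \hat{f}$ just as the paper does. The only difference is cosmetic: in \labelcref{it:Residual} you instantiate \labelcref{it:Interval} directly at $q = f(r)$, whereas the paper first produces $p$ with $r\in[\hat{f}(p), \hat{f}(p+1)[$ via the strict monotonicity of $\hat{f}$ and then identifies $p = f(r)$ --- an equivalent but slightly more roundabout step.
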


\begin{proof}
Let $q\in\Z$ and set $n = \delta\cdot q + \alpha - \beta - 1$ so that $\hat{f}
(q) = (\delta\cdot q + \alpha - \beta - 1)/\alpha = n/\alpha$.

\ref{it:RInverse}:
Since $0\le n\%\alpha\le\alpha - 1$ and $\alpha\le\delta$, we have:
\begin{equation}
\label{eq:alphadelta}
0\le\alpha - 1 - n\%\alpha\le\delta - 1 - n\%\alpha < \delta.
\end{equation}
From $\hat{f}(q) = n/\alpha$, we obtain $\alpha\cdot\hat{f}(q) = n - n\%\alpha$
and
\begin{equation*}
\alpha\cdot \hat{f}(q) + \beta = n + \beta - n\%\alpha
= \delta\cdot q + \alpha - 1 - n\%\alpha.
\end{equation*}
This and \cref{eq:alphadelta} yield $(\alpha\cdot \hat{f}(q) + \beta)/\delta =
q$, that is, $f(\hat{f}(q)) = q$. Furthermore, the equality above gives:
\begin{equation*}
\alpha\cdot(\hat{f}(q) - 1) + \beta = \delta\cdot q  - 1 - n\%\alpha
= \delta\cdot(q - 1) + \delta - 1 - n\%\alpha.
\end{equation*}
This and \cref{eq:alphadelta} yield $(\alpha\cdot(\hat{f}(q) - 1) + \beta)/
\delta = q - 1$, that is, $f(\hat{f}(q) - 1) = q - 1$.

\ref{it:Interval}: Since $\delta\ge\alpha > 0$, $f$ is non-decreasing. If $r
\le\hat{f}(q) - 1$, then $f(r)\le f(\hat{f}(q) - 1) = q - 1$. If $\hat{f}(q)\le
r\le\hat{f}(q + 1) - 1$, then $q = f(\hat{f}(q))\le f(r)\le f(\hat{f}(q + 1) -
1) = q$ (the last equality follows from \labelcref{it:RInverse} applied to $q + 1$)
and thus, $f(r) = q$. Finally, if $r\ge\hat{f}(q + 1)$, then $f(r)\ge f(\hat{f}
(q + 1)) = q + 1$ (again from \labelcref{it:RInverse} applied to $q + 1$).

\ref{it:Residual}: Since $\delta\ge\alpha>0$, $\hat{f}$ is strictly
increasing and there thus exists $p\in\Z$ such that $r\in[\hat{f}(p), \hat{f}(p
+ 1)[$. Using \ref{it:RInverse} and \ref{it:Interval} (for $q = p$) gives
$f(\hat{f}(p)) = p$, $f(\hat{f}(p) - 1) = p - 1$ and $f(r) = p$. In particular,
$f(\hat{f}(p) - 1) < f(\hat{f}(p))$ and $f(r) = f(\hat{f}(p))$.
\cref{lem:Residual} (for $g = \hat{f}$ and $q = p$) therefore gives $\mathring{f}(r)/\alpha
= r - \hat{f}(p)$. Using $p = f(r)$ again yields $r\in[\hat{f}(f(r)),
\hat{f}(f(r) + 1)[$ and $\mathring{f}(r)/\alpha = r - \hat{f}(f(r))$.
\end{proof}

From \ref{it:RInverse}, $\hat{f}$ is a right inverse of $f$. Item \ref{it:Interval}
goes further, showing the minimum solution $r$ for the equation $f(r) = q$ is $r
= \hat{f}(q)$, hence the wording of \cref{def:MRI}.

\begin{example}
\label{ex:ResidualCentury}
\Cref{thm:EAF} for $f(r_0) = (4\cdot r_0 + 3)/146097$ and $\hat{f}(q) = 146097
\cdot q/4$ gives:
\begin{equation*}
r_0\in[\hat{f}(q), \hat{f}(q + 1)[ \quad\Longrightarrow\quad (4
\cdot r_0 + 3)/146097 = q \quad\text{and}\quad (4\cdot r_0 + 3)\%146097/4 = r_0
- \hat{f}(q).
\end{equation*}
Each side of last equality provides a way to evaluate the same quantity. To
analyse performance trade-offs, we introduce auxiliary variables referring to
partial results and compare both methods side-by-side:
\begin{equation*}
\begin{aligned}
  & \text{ Calculation of } (4\cdot r_0 + 3)\% 146097/4: \\
n &= 4\cdot r_0 + 3, \\
q &= n/146097, \\
u &= n - 146097\cdot q, \quad (\text{\em i.e., } u = n \% 146097) \\
r &= u/4, \\
\end{aligned}
\qquad\vrule\qquad
\begin{aligned}
  & \text{ Calculation of } r_0 - \hat{f}(q): \\
n &= 4\cdot r_0 + 3,  \\
q &= n/146097,        \\
v &= 146097\cdot q/4 \quad (\text{\em i.e., } v = \hat{f}(q)), \\
r &= r_0 - v.
\end{aligned}
\end{equation*}
For fairness of comparison, $u$ is set to $n - 146097\cdot q$ rather than $n\%
146097$, since the former better represents the CPU instructions typically emitted by
compilers to evaluate the latter. Both listings show exactly the same
operations, the only difference being the order of the last two operations. Our preference
for the left listing will be explained in the next steps of the Gregorian calendar algorithm.
Indeed, the expressions above appear in these calculations and are followed
by the evaluation of $4\cdot r + 3$. If $r$ is calculated as shown on the left,
then $4\cdot r + 3 = 4\cdot(u/4) + 3$ and smart compilers reduce this expression
to $u\ |\ 3$, where $|$ denotes bitwise {\em or}.
\end{example}

\begin{example}
\label{ex:ResidualYear}
\Cref{thm:EAF} for $f(r_1) = (4\cdot r_1 + 3)/1461$ and $\hat{f}(q) = 1461\cdot
q/4$ gives:
\begin{equation*}
r_1\in[\hat{f}(q), \hat{f}(q + 1)[ \quad\Longrightarrow\quad (4\cdot r_1 + 3)
/1461 = q \quad\text{and}\quad (4\cdot r_1 + 3)\%1461/4 = r_1 - \hat{f}(q).
\end{equation*}
\end{example}

\begin{example}
\label{ex:ResidualMonth}
\cref{thm:EAF} for $f(r_2) = (5\cdot r_2 + 461)/153$ and $\hat{f}(m_0) = (153
\cdot m_0 - 457)/5$ gives:
\begin{equation*}
r_2\in[\hat{f}(m_0), \hat{f}(m_0 + 1)[ \quad\Longrightarrow\quad (5\cdot r_2 +
461)/153 = m_0 \quad\text{and}\quad (5\cdot r_2 + 461)\%153/5 = r_2
- \hat{f}(m_0).
\end{equation*}
\end{example}

\section{Fast evaluation of Euclidean affine functions}
\label{sec:FastEAF}

This section covers optimisations for $(\alpha\cdot r + \beta)/\delta$. The
particular case $\alpha = 1$ and $\beta = 0$ has been considered by many authors
\cite{Alverson1991, CavagninoWerbrouck2007, GranlundMontgomery1994,
MagenheimerPetersPettisZuras1987, Robison2005}, who have derived strength reductions
whereby $r/\delta$ is reduced to a multiplication and cheaper operations. Major
compilers implement the algorithms of \cite{GranlundMontgomery1994}. Faster
algorithms exist but compilers are not able to use them. For instance, $r$ might
be restricted to a small interval but, unaware of this fact, the compiler must
assume that $r$ can take any allowed value for its type, usually in an interval of
form $[0, 2^w[$. We change focus and instead of looking for an
algorithm on a given interval, we start with the best algorithm we know and
search the largest interval on which it can be applied. If such an interval is satisfactory
for our application, then we use this algorithm.

When $\delta$ is a power of two, the evaluation of $r/\delta$ reduces to a
bitwise shift, which is very cheap in binary CPUs. The Fundamental Theorem of
Arithmetic implies that $\delta$ is a power of two if, and only if, $2^k\%\delta =
0$ for some $k\in\Zp$. For $(\alpha\cdot r + \beta)/\delta$, a trivial reduction
is also available when $2^k\cdot\alpha\%\delta = 0$. Indeed, we have $2^k\cdot
\alpha = \delta\cdot(2^k\cdot\alpha/\delta)$, and setting $\alpha' = 2^k\cdot
\alpha/ \delta$ and $\beta' = 2^k\cdot\beta/\delta$ gives $2^k\cdot\alpha =
\delta\cdot\alpha'$ and
\begin{equation*}
(\alpha\cdot r + \beta)/\delta
= \left(2^k\cdot\alpha\cdot r + 2^k\cdot\beta\right)/\left(2^k\cdot\delta\right)
= \left(\delta\cdot\alpha'\cdot r + 2^k\cdot\beta\right)/\left(2^k\cdot\delta
  \right)
= \left(\alpha'\cdot r + \beta'\right)/2^k.
\end{equation*}

Conceptually, to evaluate $(\alpha\cdot r + \beta)/\delta$ we multiply $r$ by
$\alpha\cdot\delta^{-1}$ and add the result to $\beta\cdot\delta^{-1}$. Assume,
for the time being, that $\beta = 0$. We take an approximation $\alpha'\in\Z$ of
$2^k\cdot\alpha\cdot\delta^{-1}$, where $k\in\Zp$ is carefully chosen, and
evaluate $\alpha'\cdot r/2^k$. Two natural choices for $\alpha'$ are $\left
\lceil2^k\cdot\alpha\cdot\delta^{-1}\right\rceil$ and $\left\lfloor2^k\cdot
\alpha\cdot\delta^{-1}\right\rfloor$, covered by
\cref{thm:FastEAFUp,thm:FastEAFDown}, respectively.

\begin{lemma}
\label{lem:FastEAF}
Let $f$ be the EAF $f(r) = (\alpha\cdot r + \beta)/\delta$. Then,
\begin{equation*}
\alpha\cdot(r/\delta) = f(r) - f(r\%\delta), \qquad\forall r\in\Z.
\end{equation*}
\end{lemma}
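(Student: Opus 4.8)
The plan is to reduce everything to the defining identity of Euclidean division together with a single ``shift'' property of the quotient. First I would write $q = r/\delta$ and $s = r\%\delta$, so that by definition $r = \delta\cdot q + s$ with $0\le s < |\delta|$. Substituting this into the numerator of $f$ gives $\alpha\cdot r + \beta = \alpha\cdot\delta\cdot q + (\alpha\cdot s + \beta) = \delta\cdot(\alpha\cdot q) + (\alpha\cdot s + \beta)$, so that a multiple of $\delta$, namely $\delta\cdot(\alpha\cdot q)$, appears explicitly inside the numerator.

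The key step is then the observation that adding an exact integer multiple of $\delta$ to a numerator merely shifts the quotient by that multiple: for any $a, b\in\Z$ one has $(a + \delta\cdot b)/\delta = a/\delta + b$. Applying this with $a = \alpha\cdot s + \beta$ and $b = \alpha\cdot q$ yields
\begin{equation*}
f(r) = (\alpha\cdot r + \beta)/\delta = \alpha\cdot q + (\alpha\cdot s + \beta)/\delta = \alpha\cdot(r/\delta) + f(r\%\delta),
\end{equation*}
where the last equality uses $(\alpha\cdot s + \beta)/\delta = (\alpha\cdot(r\%\delta) + \beta)/\delta = f(r\%\delta)$. Rearranging gives exactly $\alpha\cdot(r/\delta) = f(r) - f(r\%\delta)$, and since no assumption on the sign of $\delta$ enters, the identity holds for every $\delta\ne 0$ and every $r\in\Z$, matching the generality of the statement.

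The only point that needs care --- and the main obstacle, given that the paper emphasises that integer division is \emph{not} additive --- is justifying the shift property. I would prove it from uniqueness of the Euclidean representation: writing $a = \delta\cdot(a/\delta) + a\%\delta$ with $0\le a\%\delta < |\delta|$, one has $a + \delta\cdot b = \delta\cdot(a/\delta + b) + a\%\delta$, and since the remainder still lies in $[0, |\delta|)$, uniqueness forces $(a + \delta\cdot b)/\delta = a/\delta + b$. This is legitimate precisely because the added term is an exact multiple of $\delta$, which is what distinguishes it from the general failure $(a + b)/\delta\ne a/\delta + b/\delta$; this distinction is the crux that makes the whole computation valid.
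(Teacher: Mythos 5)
Your proof is correct and follows essentially the same route as the paper's: substitute $r = \delta\cdot(r/\delta) + r\%\delta$ into the numerator and split off the exact multiple of $\delta$ to get $f(r) = \alpha\cdot(r/\delta) + f(r\%\delta)$. The only difference is that you explicitly prove the shift property $(a + \delta\cdot b)/\delta = a/\delta + b$ via uniqueness of the Euclidean representation, a step the paper's one-line proof uses implicitly; this is a sound (and arguably welcome) elaboration rather than a different approach.
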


\begin{proof}
$f(r) = (\alpha\cdot(\delta\cdot(r/\delta) + r\%\delta) + \beta)/\delta
= \alpha\cdot(r/\delta) + (\alpha\cdot(r\%\delta) + \beta)/\delta = \alpha\cdot
(r/\delta) + f(r\%\delta)$.
\end{proof}

The next theorem does not assume $2^k\cdot\alpha\%\delta\ne 0$, but when
equality holds a simpler reduction can be made as seen above. The result becomes more interesting
when we do have $2^k\cdot\alpha\%\delta\ne 0$, in which case $\left\lceil2^k\cdot\alpha
\cdot\delta^{-1}\right\rceil= 2^k\cdot\alpha/\delta + 1$.

\begin{theorem}
\label{thm:FastEAFUp}
Let $k\in\Zp$ and $f$ be the EAF $f(r) = (\alpha\cdot r + \beta)/\delta$ with
$\delta > 0$. Set $\alpha' = 2^k\cdot\alpha/\delta + 1$, $\varepsilon = \delta -
2^k\cdot\alpha\%\delta$ and $\beta' = - \min\{\alpha'\cdot r - 2^k\cdot f(r)\ ;
\ r\in[0, \delta[\ \}$. For $r\in[0, \delta[$ define:
\begin{equation*}
q(r) = \min\{p\in\Zp\ ; \ \varepsilon\cdot p + \alpha'\cdot r + \beta' - 2^k
\cdot f(r)\ge 2^k\} \quad\text{and}\quad
M(r) = \delta\cdot q(r) + r. \\
\end{equation*}
Let $N = \min\{ M(r)\ ;\ r\in[0, \delta[\ \}$. Then,
\begin{equation*}
(\alpha\cdot r + \beta)/\delta = \left(\alpha'\cdot r + \beta'\right)/2^k,
\quad\forall r\in[0, N[.
\end{equation*}
\end{theorem}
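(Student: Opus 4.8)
The plan is to compare the exact quotient $f(r)=(\alpha\cdot r+\beta)/\delta$ with the fast quotient $g(r):=(\alpha'\cdot r+\beta')/2^k$ through the auxiliary quantity
\[
h(r)=\alpha'\cdot r+\beta'-2^k\cdot f(r).
\]
Since $2^k>0$, the definition of Euclidean division gives at once that $g(r)=f(r)$ if, and only if, $0\le h(r)<2^k$ (for any signs of the operands, as $h(r)$ is exactly the remainder of $\alpha'\cdot r+\beta'$ by $2^k$ precisely when $f(r)$ is the quotient). The whole statement therefore reduces to proving that $0\le h(r)<2^k$ holds throughout $r\in[0,N[$, and the strategy is to control $h$ via a periodic decomposition of $r$.

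First I would record two elementary identities. From $\alpha'=2^k\cdot\alpha/\delta+1$ and $\varepsilon=\delta-2^k\cdot\alpha\%\delta$, writing $2^k\cdot\alpha=\delta\cdot(2^k\cdot\alpha/\delta)+2^k\cdot\alpha\%\delta$, one obtains $\delta\cdot\alpha'=2^k\cdot\alpha+\varepsilon$. Also, since $\alpha\cdot\delta/\delta=\alpha$ exactly and adding a multiple of $\delta$ to a numerator raises the quotient by that multiple, $f(r+\delta)=f(r)+\alpha$. Decomposing $r=\delta\cdot p+s$ with $s\in[0,\delta[$ and $p\in\Zp$ and combining the two identities yields the key relation
\[
h(\delta\cdot p+s)=h(s)+\varepsilon\cdot p,
\]
so $h$ is determined by its values on the single period $[0,\delta[$ and grows by exactly $\varepsilon$ per period.

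Next I would exploit the defining minimisation of $\beta'$: the choice $\beta'=-\min\{\alpha'\cdot r-2^k\cdot f(r)\ ;\ r\in[0,\delta[\ \}$ is exactly what forces $h(s)\ge0$ for all $s\in[0,\delta[$. As $\varepsilon\ge1>0$ (because $2^k\cdot\alpha\%\delta\le\delta-1$), the relation above then gives $h(r)\ge0$ for every $r\in\Zp$, so the lower bound is automatic and only $h(r)<2^k$ can fail. For the upper bound, note that for fixed $s$ the map $p\mapsto h(\delta\cdot p+s)=h(s)+\varepsilon\cdot p$ is strictly increasing; hence $q(s)$ is precisely the least $p$ with $h(\delta\cdot p+s)\ge2^k$, and $M(s)=\delta\cdot q(s)+s$ is the least argument in the residue class of $s$ where the upper bound is violated. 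Consequently, for any $r=\delta\cdot p+s<N=\min\{M(s)\ ;\ s\in[0,\delta[\ \}$ we have $\delta\cdot p+s<\delta\cdot q(s)+s$, so $p<q(s)$, whence $h(r)=h(s)+\varepsilon\cdot p<2^k$ by minimality of $q(s)$. Together with $h(r)\ge0$ this gives $g(r)=f(r)$ on all of $[0,N[$.

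The part demanding most care is the bookkeeping behind the periodic relation $h(\delta\cdot p+s)=h(s)+\varepsilon\cdot p$: establishing $\delta\cdot\alpha'=2^k\cdot\alpha+\varepsilon$ and $f(r+\delta)=f(r)+\alpha$, then chaining them correctly. Once this identity is in place, the choice of $\beta'$ eliminates the lower bound and the definitions of $q(r)$, $M(r)$ and $N$ are, in effect, reverse-engineered so that $N$ marks the first argument at which $h$ reaches $2^k$; the remaining deductions are routine.
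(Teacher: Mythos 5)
Your proposal is correct and follows essentially the same route as the paper: your key relation $h(\delta\cdot p+s)=h(s)+\varepsilon\cdot p$ is exactly the paper's identity $\alpha'\cdot r - 2^k\cdot f(r) = \varepsilon\cdot(r/\delta)+\alpha'\cdot(r\%\delta)-2^k\cdot f(r\%\delta)$, your use of $\beta'$ to force $h\ge 0$ on $[0,\delta[$ and of $q$, $M$, $N$ to bound $h<2^k$ mirrors the paper's argument, and the conclusion via uniqueness of Euclidean division by $2^k$ is identical. The only cosmetic difference is that you obtain the periodic relation by iterating $f(r+\delta)=f(r)+\alpha$, whereas the paper packages the same fact as a one-line lemma ($\alpha\cdot(r/\delta)=f(r)-f(r\%\delta)$).
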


\begin{proof}
First note that $q(r)$ is well-defined since $\varepsilon > 0$. So are $M(r)$
and $N$. Also, $\delta > 0$ implies $M(r)\ge 0$ and, consequently, $N\ge 0$.
Nevertheless, we do not exclude the possibility that $N = 0$, in which case this
theorem's conclusion is vacuously true. The sequel assumes that $N > 0$ and $r\in\Z$.
Note that:
\begin{equation}
\label{eq:epsilonUp}
\varepsilon = \delta - \left(2^k\cdot\alpha - \delta\cdot\left(2^k\cdot\alpha/
\delta\right)\right) = \delta\cdot\left(1 + 2^k\cdot\alpha/\delta\right) - 2^k
\cdot\alpha = \delta\cdot\alpha' - 2^k\cdot\alpha.
\end{equation}
It follows that
$\delta\cdot\alpha'\cdot(r/\delta) - 2^k\cdot\alpha\cdot(r/\delta) = \varepsilon
\cdot(r/\delta)$. Hence,
\begin{equation*}
\alpha'\cdot r - 2^k\cdot\alpha\cdot(r/\delta)
 = \alpha'\cdot(\delta\cdot(r/\delta) + r\%\delta) - 2^k\cdot\alpha\cdot(r/
 \delta) = \varepsilon\cdot(r/\delta) + \alpha'\cdot(r\%\delta).
\end{equation*}
From the above and \cref{lem:FastEAF} we conclude that:
\begin{equation}
\label{eq:FastEAFUp}
\alpha'\cdot r - 2^k\cdot f(r) = \varepsilon\cdot(r/\delta) + \alpha'\cdot(r\%
\delta) - 2^k\cdot f(r\%\delta).
\end{equation}

Suppose further that $r\in[0, N[$. By definition, $N\le M(r\%\delta)$ and hence
$r < M(r\%\delta)$, {\em i.e.}, $\delta\cdot(r/\delta) + r\% \delta < \delta
\cdot q(r\%\delta) + r\%\delta$. It follows that $r/\delta < q(r\%\delta)$. From
the definition of $q(r\%\delta)$ and the fact that $r/\delta\ge 0$, we obtain:
\begin{equation}
\label{eq:small}
\varepsilon\cdot(r/\delta) + \alpha'\cdot(r\%\delta) + \beta' - 2^k\cdot f(r\%
\delta) < 2^k.
\end{equation}

The definition of $\beta'$ yields $\alpha'\cdot(r\%\delta) + \beta' - 2^k\cdot
f(r\%\delta)\ge 0$. This and $\varepsilon\cdot(r/\delta)\ge 0$ mean that the
left side of \cref{eq:small} is non-negative. Therefore, \cref{eq:FastEAFUp}
yields $0\le\alpha'\cdot r + \beta' - 2^k\cdot f(r) < 2^k$ and the conclusion
follows from Euclidean division by $2^k$.
\end{proof}

\begin{example}
\label{ex:FastEAFUpMonthCount}
\Cref{thm:FastEAFUp} for $k = 5$ and $f(m_0) = (153\cdot m_0 - 457)/5$ yields
$\alpha' = 980$, $\beta' = -2928$ and $N = 12$, that is,
\begin{equation*}
(153\cdot m_0 - 457)/5 = (980\cdot m_0 - 2928)/2^5, \quad\forall m_0\in[0, 12[.
\end{equation*}
\end{example}

The next theorem does assume that $2^k\cdot\alpha\%\delta\ne 0$ and thus, $2^k\cdot
\alpha/\delta = \left\lfloor2^k\cdot\alpha\cdot\delta^{-1}\right\rfloor$.

\begin{theorem}
\label{thm:FastEAFDown}
Let $k\in\Zp$ and $f$ be the EAF $f(r) = (\alpha\cdot r + \beta)/\delta$ with
$\delta > 0$ and $2^k\cdot\alpha\%\delta > 0$. Set $\alpha' = 2^k\cdot\alpha/
\delta$, $\varepsilon = 2^k\cdot\alpha\%\delta$ and $\beta' = \min\{2^k - 1 -
(\alpha'\cdot r - 2^k\cdot f(r))\ ; \ r\in[0, \delta[\ \}$. For $r\in[0,
\delta[$ define:
\begin{equation*}
q(r) = \min\{p\in\Zp\ ;\ -\varepsilon\cdot p + \alpha'\cdot r + \beta' - 2^k
\cdot f(r) < 0\} \quad\text{and}\quad
M(r) = \delta\cdot q(r) + r. \\
\end{equation*}
Let $N = \min\{ M(r)\ ;\ r\in[0, \delta[\ \}$. Then,
\begin{equation*}
(\alpha\cdot r + \beta)/\delta = (\alpha'\cdot r + \beta')/2^k,
\quad\forall r\in[0, N[.
\end{equation*}
\end{theorem}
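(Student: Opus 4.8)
The plan is to mirror the proof of \cref{thm:FastEAFUp} almost line by line, since \cref{thm:FastEAFDown} is its ``round-down'' dual: here $\alpha' = 2^k\cdot\alpha/\delta$ is the floor rather than the ceiling of $2^k\cdot\alpha\cdot\delta^{-1}$. First I would record the analogue of the key identity. By definition of Euclidean division, $2^k\cdot\alpha = \delta\cdot(2^k\cdot\alpha/\delta) + 2^k\cdot\alpha\%\delta$, so with the present definitions $\varepsilon = 2^k\cdot\alpha - \delta\cdot\alpha'$ --- the same relation as \cref{eq:epsilonUp} but with the sign of $\varepsilon$ reversed (and obtained directly from the remainder, without the $+1$ manipulation). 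The hypothesis $2^k\cdot\alpha\%\delta > 0$ gives $\varepsilon > 0$, which is exactly what makes $q(r)$ well-defined: as $p\to\infty$ the quantity $-\varepsilon\cdot p + \alpha'\cdot r + \beta' - 2^k\cdot f(r)$ tends to $-\infty$, so its defining set is nonempty. It also makes $M(r)$ and $N$ well-defined, and $\delta > 0$ gives $N\ge 0$, so the case $N = 0$ is vacuous and I may assume $N > 0$.

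Next I would establish the decomposition analogous to \cref{eq:FastEAFUp}. Writing $r = \delta\cdot(r/\delta) + r\%\delta$, substituting $\alpha'\cdot\delta = 2^k\cdot\alpha - \varepsilon$, and applying \cref{lem:FastEAF} in the form $\alpha\cdot(r/\delta) = f(r) - f(r\%\delta)$, I obtain
\begin{equation*}
\alpha'\cdot r - 2^k\cdot f(r) = -\varepsilon\cdot(r/\delta) + \alpha'\cdot(r\%\delta) - 2^k\cdot f(r\%\delta),
\end{equation*}
which is \cref{eq:FastEAFUp} with $+\varepsilon\cdot(r/\delta)$ replaced by $-\varepsilon\cdot(r/\delta)$. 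The goal, as before, is to show $0\le\alpha'\cdot r + \beta' - 2^k\cdot f(r) < 2^k$ for $r\in[0, N[$, whence the claimed equality follows from Euclidean division by $2^k$.

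The one genuine conceptual difference --- and the step to handle with care --- is that the \emph{roles of the two bounds swap} relative to \cref{thm:FastEAFUp}. Because the correcting term is now $-\varepsilon\cdot(r/\delta)\le 0$, the upper bound is the ``free'' one: the definition of $\beta'$ gives $\alpha'\cdot(r\%\delta) + \beta' - 2^k\cdot f(r\%\delta)\le 2^k - 1$, and subtracting the nonnegative $\varepsilon\cdot(r/\delta)$ keeps the expression strictly below $2^k$. It is now the \emph{lower} bound that is controlled by $q$ and $N$. For $r\in[0, N[$ one has $N\le M(r\%\delta)$, hence $r < M(r\%\delta)$, which after cancelling $r\%\delta$ yields $r/\delta < q(r\%\delta)$. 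Since $r/\delta\ge 0$ is then a value strictly below the minimal $p$ at which the defining quantity of $q(r\%\delta)$ turns negative, minimality forces $-\varepsilon\cdot(r/\delta) + \alpha'\cdot(r\%\delta) + \beta' - 2^k\cdot f(r\%\delta)\ge 0$, i.e. $\alpha'\cdot r + \beta' - 2^k\cdot f(r)\ge 0$. Combining the two bounds and dividing by $2^k$ finishes the argument.

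I expect no real obstacle beyond book-keeping the sign reversal consistently. The actual danger is mechanical: copying \cref{thm:FastEAFUp} and attaching the $q$/$N$ machinery to the wrong inequality. I would therefore double-check that, with $\varepsilon > 0$ and the new orientation of the defining inequality for $q(r)$ (namely $< 0$ rather than $\ge 2^k$), it is indeed the \emph{lower} bound that $q$ and $N$ guard here, the upper bound being delivered automatically by the choice of $\beta'$.
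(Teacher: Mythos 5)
Your proposal is correct and follows essentially the same route as the paper's own proof: the identity $\varepsilon = 2^k\cdot\alpha - \delta\cdot\alpha'$, the decomposition via \cref{lem:FastEAF}, the deduction $r < M(r\%\delta) \Rightarrow r/\delta < q(r\%\delta)$ giving the lower bound by minimality of $q$, the upper bound $\le 2^k - 1$ coming directly from the definition of $\beta'$, and the conclusion by Euclidean division by $2^k$. Your observation that the roles of the two bounds swap relative to \cref{thm:FastEAFUp} is exactly the structural point of the paper's argument, so there is nothing to add.
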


\begin{proof}
The assumption $2^k\cdot\alpha\%\delta > 0$ means that $\varepsilon > 0$ and ensures
that $q(r)$ is well-defined, as are $M(r)$ and $N$. Also, $\delta > 0$ implies
$M(r)\ge 0$ and, consequently, $N\ge 0$. Nevertheless, we do not exclude the
possibility that $N = 0$, in which case this theorem's conclusion is vacuously
true. The sequel assumes that $N > 0$ and $r\in\Z$.

Note that:
\begin{equation}
\label{eq:epsilonDown}
\varepsilon = 2^k\cdot\alpha\%\delta = 2^k\cdot\alpha - \delta\cdot\left(
2^k\cdot\alpha/\delta\right) = 2^k\cdot\alpha - \delta\cdot\alpha'.
\end{equation}
It follows that $2^k\cdot\alpha\cdot(r/\delta) - \delta\cdot\alpha'\cdot(r/
\delta) = \varepsilon \cdot(r/\delta)$. Hence,
\begin{equation*}
\alpha'\cdot r - 2^k\cdot\alpha\cdot(r/\delta)
 = \alpha'\cdot(\delta\cdot(r/\delta) + r\%\delta) - 2^k\cdot\alpha\cdot(r/
\delta) = - \varepsilon\cdot(r/\delta) + \alpha'\cdot(r\%\delta).
\end{equation*}
From the above and \cref{lem:FastEAF} we conclude that:
\begin{equation}
\label{eq:FastEAFDown}
\alpha'\cdot r - 2^k\cdot f(r) = -\varepsilon\cdot(r/\delta) + \alpha'\cdot(r\%
\delta) - 2^k\cdot f(r\%\delta).
\end{equation}

Suppose further that $r\in[0, N[$. By definition $N\le M(r\%\delta)$ and hence,
$r < M(n\%\delta)$, {\em i.e.}, $\delta\cdot(r/\delta) + r\%\delta < \delta\cdot
q(r\%\delta) + r\%\delta$. Therefore, $r/\delta < q(r\%\delta)$. From the
definition of $q(r\%\delta)$ and the fact that $r/\delta\ge 0$, we obtain:
\begin{equation}
\label{eq:small_}
-\varepsilon\cdot(r/\delta) + \alpha'\cdot(r\%\delta) + \beta' - 2^k\cdot f(r\%
\delta)\ge 0.
\end{equation}

The definition of $\beta'$ yields $\alpha'\cdot(r\%\delta) + \beta' - 2^k\cdot
f(r\%\delta)\le 2^k - 1 < 2^k$. This and $\varepsilon\cdot(r/\delta)\ge 0$ mean
that the left side of \cref{eq:small_} is less than $2^k$. Therefore,
\cref{eq:FastEAFDown} yields $0\le\alpha'\cdot r + \beta' - 2^k\cdot f(r) < 2^k$
and the conclusion follows from Euclidean division by $2^k$.
\end{proof}

\begin{example}
\label{ex:FastEAFDownMonthCount}
\Cref{thm:FastEAFDown} for $k = 5$ and $f(m_0) = (153\cdot m_0 - 457)/5$ yields
$\alpha' = 979$, $\beta' = -2919$ and $N = 34$, namely:
\begin{equation}
\label{eq:FastEAFDownMonthCount}
(153\cdot m_0 - 457)/5 = (979\cdot m_0 - 2919)/2^5, \quad\forall m_0\in[0, 34[.
\end{equation}
\end{example}

\cref{ex:FastEAFUpMonthCount,ex:FastEAFDownMonthCount} give two fast
alternatives for the same EAF. In this case, the latter has the advantage of being valid on a
larger range, $[0, 34[$ as opposed to $[0, 12[$, but in other cases the opposite is
true. For EAFs known prior to compilation, one can find both alternatives and select
the one that is valid in the larger range. Compilers can do the same for EAFs found in
source code. However, if the costs of finding the two alternatives need to be
avoided, then a simple and fast-selecting criterion is based on the following
heuristics. To maximise $N$ we seek to maximise $q(r)$, which is the minimum
value of $p$ for which $\varepsilon\cdot p$ reaches a certain threshold. The
smaller $\varepsilon$ is, the larger $p$ must be for this to happen. Hence, we
choose the alternative with the smallest $\varepsilon$ amongst its two possible
values, namely, $\varepsilon_1 = \delta - 2^k\cdot \alpha\%\delta$ and
$\varepsilon_2 = 2^k\cdot\alpha\%\delta$. (If $\delta$ is odd, then
$\varepsilon_1\ne\varepsilon_2$.) Another interpretation of this criterion is
that it sets $\alpha'$ to the best approximation of $2^k\cdot\alpha\cdot\delta^
{-1}$ given by $\left\lceil2^k\cdot\alpha\cdot\delta^{-1}\right\rceil$ or $\left
\lfloor2^k\cdot\alpha\cdot\delta^{-1}\right\rfloor$. Indeed, if $2^k\cdot\alpha
\%\delta\ne 0$, then \cref{eq:epsilonUp,eq:epsilonDown} give:
\begin{equation*}
\begin{aligned}
\varepsilon_1
&= \delta\cdot\left(2^k\cdot\alpha/\delta + 1\right) - 2^k\cdot\alpha
  \hspace{-1.5ex}
&= \delta\cdot\left\lceil 2^k\cdot\alpha\cdot\delta^{-1}\right\rceil - 2^k\cdot
  \alpha,
\\
\varepsilon_2
&= 2^k\cdot\alpha - \delta\cdot\left(2^k\cdot\alpha/\delta\right)
&= 2^k\cdot\alpha - \delta\cdot\left\lfloor 2^k\cdot\alpha\cdot\delta^{-1}\right
  \rfloor.
\end{aligned}
\end{equation*}
It follows that $\varepsilon_1 < \varepsilon_2$ if, and only if, $\left\lceil
2^k\cdot\alpha\cdot\delta^{-1}\right\rceil - 2^k\cdot\alpha\cdot\delta^{-1} <
2^k\cdot\alpha\cdot\delta^{-1} - \left\lfloor 2^k\cdot\alpha\cdot\delta^{-1}
\right\rfloor$.

\begin{example}
\label{ex:FastEAFDownMonth}
\cref{thm:FastEAFDown} for $k = 16$ and $f(r_2) = (5\cdot r_2 + 461)/153$ yields
$\alpha' = 2141$, $\beta' = 197913$ and $N = 734$, namely:
\begin{equation}
\label{eq:FastEAFDownMonth}
(5\cdot r_2 + 461)/153 = (2141\cdot r_2 + 197913)/2^{16}, \quad\forall r_2\in[0,
734[.
\end{equation}
\end{example}

In \cref{thm:FastEAFUp,thm:FastEAFDown}, $\alpha'$ and $\varepsilon$ are
obtained in $O(1)$ operations and $\beta'$ is found by an $O(\delta)$ search.
For each $r \in[0, \delta[$, $q(r)$ and $N(r)$ are calculated in $O(1)$
operations. $N$ and the overall search for a fast EAF then have $O(\delta)$
time complexity. Since many EAFs featuring in calendar calculations have small
divisors, finding more efficient alternatives for them is reasonably fast.

\subsection{Fast division -- the special case \texorpdfstring{{$\alpha = 1,
\beta = 0$}}{alpha = 1, beta = 0}}
\label{sec:SpecialCase}

This section examines $r/\delta$ with $\delta > 0$, {\em i.e.}, the
particular case of the EAF $f(r) = (\alpha\cdot r + \beta)/\delta$ where $\alpha
= 1$, $\beta = 0$ and $\delta > 0$.

For $\alpha = 1$, \cref{thm:FastEAFUp,thm:FastEAFDown} set $\alpha'$ to $\left
\lceil2^k\cdot\delta^{-1}\right\rceil$ and $\left\lfloor 2^k\cdot\delta^{-1}
\right\rfloor$, respectively. The former is the choice taken by
\cite{Alverson1991, CavagninoWerbrouck2007, GranlundMontgomery1994} and the
latter by \cite{MagenheimerPetersPettisZuras1987}. Finally, \cite{Robison2005}
and an appendix to \cite{CavagninoWerbrouck2007},
consider both and, in this particular case, rigourously justify the heuristics
we have suggested for choosing between the two approaches.

This section follows a more direct path but most of its results can be obtained
from the above for $\alpha = 1$ and $\beta = 0$. For instance, for $\alpha'>0$,
let $\beta' = - \min\{\alpha'\cdot r - 2^k\cdot f(r)\ ; \ r\in[0, \delta[\ \}$
as in \cref{thm:FastEAFUp}. Since $f(r) = r/\delta = 0$ for $r\in[0, \delta[$,
we have $\beta' = - \min\{\alpha'\cdot r\ ;\ r\in[0, \delta[\ \} = 0$. Hence, in
contrast to the general case, there is no need for an $O(\delta)$ search to obtain
$\beta'$. Similarly, $\beta' = \min\{2^k - 1 - (\alpha'\cdot r - 2^k\cdot f(r))\
; \ r\in[0, \delta[ \ \}$, as defined by \cref{thm:FastEAFDown}, can be proven to
be $\alpha' + (2^k\%\delta - 1)$, the same value found in
\cite{MagenheimerPetersPettisZuras1987}. \Cref{thm:FastEAFDown} assumes that $2^k\%
\delta\ge 1$ and, in the definition of $\beta'$, had we subtracted $2^k\%\delta$
instead of $1$ the proof would still work but $\beta'$ would have a smaller
value, namely, $\beta' = \alpha'$. In this case, the final reduction would be
$r/\delta = \alpha'\cdot(r + 1)/2^k$ as found in \cite{CavagninoWerbrouck2007,
Robison2005}. In addition, by making $\beta'$ smaller, $q(r)$, $M(r)$ and $N$
also decrease. Hence, \cref{thm:FastEAFDown} obtains a range of validity that is
no smaller than the one obtained in \cite{CavagninoWerbrouck2007, Robison2005}.

The remainder of this section focuses on the round-up approach but the
round-down alternative could be similarly considered.
Our reasons for this are that the round-up approach is mostly used by compilers
and that the appearance of the term $r + 1$ can potentially lead to an overflow.

\begin{lemma}
\label{lem:FastDivision}
Let $\delta,k\in\Zp$ with $\delta > 0$. Set $\alpha' = 2^k/\delta + 1$ and
$\varepsilon = \delta - 2^k\%\delta$. Then, $\alpha'\cdot\delta = 2^k +
\varepsilon$ and for any $n\in\Z$ we have:
\begin{equation}
\label{eq:FastDivision}
\alpha'\cdot n/2^k = n/\delta \quad\text{if, and only if,}\quad
0\le\varepsilon \cdot(n/\delta) + \alpha'\cdot(n\%\delta) < 2^k.
\end{equation}
\end{lemma}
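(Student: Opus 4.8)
The plan is to reduce the statement to the uniqueness characterisation of Euclidean division by $2^k$, after recording one algebraic identity. First I would establish the opening claim $\alpha'\cdot\delta = 2^k + \varepsilon$. Starting from the Euclidean division $2^k = \delta\cdot(2^k/\delta) + 2^k\%\delta$ and multiplying the definition $\alpha' = 2^k/\delta + 1$ by $\delta$, a short rearrangement gives
\begin{equation*}
\alpha'\cdot\delta = \delta\cdot(2^k/\delta) + \delta = (2^k - 2^k\%\delta) + \delta = 2^k + (\delta - 2^k\%\delta) = 2^k + \varepsilon.
\end{equation*}

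Next I would derive the identity driving the equivalence. For any $n\in\Z$ I substitute $n = \delta\cdot(n/\delta) + n\%\delta$ into $\alpha'\cdot n$ and apply the identity just proved, obtaining $\alpha'\cdot n = (2^k + \varepsilon)\cdot(n/\delta) + \alpha'\cdot(n\%\delta)$, hence
\begin{equation*}
\alpha'\cdot n - 2^k\cdot(n/\delta) = \varepsilon\cdot(n/\delta) + \alpha'\cdot(n\%\delta).
\end{equation*}
I would stress that this holds for every integer $n$, negative ones included, because the remainder convention fixed in the preliminaries keeps $n\%\delta\in[0,\delta[$ irrespective of the sign of $n$.

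Finally I would invoke the defining property of the quotient. Since $k\in\Zp$ ensures $2^k > 0$ and $n/\delta\in\Z$, the equality $\alpha'\cdot n/2^k = n/\delta$ holds if, and only if, $n/\delta$ is the Euclidean quotient of $\alpha'\cdot n$ by $2^k$, that is, if, and only if, $0\le\alpha'\cdot n - 2^k\cdot(n/\delta) < 2^k$. Substituting the identity from the previous step converts this two-sided bound into $0\le\varepsilon\cdot(n/\delta) + \alpha'\cdot(n\%\delta) < 2^k$, which is precisely \cref{eq:FastDivision}.

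I do not anticipate a serious obstacle: the argument is a rearrangement followed by the uniqueness of Euclidean division. The one point that merits care is the last step, namely the equivalence between equality of the two quotients and the stated bound; this rests on the elementary fact that, for the fixed positive divisor $2^k$, an integer $q$ equals $m/2^k$ exactly when $0\le m - 2^k\cdot q < 2^k$, applied with $m = \alpha'\cdot n$ and $q = n/\delta$. It is also worth noting that, unlike \cref{thm:FastEAFUp}, this lemma imposes no restriction on $n$ beyond $n\in\Z$, since here the condition is an exact characterisation rather than a sufficient condition valid only on an interval.
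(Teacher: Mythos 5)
Your proof is correct and takes essentially the same approach as the paper's: both rest on the identity $\alpha'\cdot n - 2^k\cdot(n/\delta) = \varepsilon\cdot(n/\delta) + \alpha'\cdot(n\%\delta)$, followed by the uniqueness characterisation of Euclidean division by $2^k$. The only difference is presentational: the paper obtains this identity by specialising \cref{eq:epsilonUp} and \cref{eq:FastEAFUp} from the proof of \cref{thm:FastEAFUp} to $\alpha = 1$, $\beta = 0$ (via \cref{lem:FastEAF}), whereas you re-derive it directly from the definitions.
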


\begin{proof}
Taking $\alpha = 1$ in \cref{thm:FastEAFUp} gives the same $\alpha'$ and
$\varepsilon$ as here. \Cref{eq:epsilonUp} then reads $\varepsilon = \alpha'
\cdot\delta - 2^k$. Similarly, from \cref{eq:FastEAFUp} with $f(n) = n/\delta$
and $f(n\%\delta) = n\%\delta/\delta = 0$ we obtain:
\begin{equation*}
\alpha'\cdot n - 2^k\cdot(n/\delta) = \varepsilon\cdot(n/\delta) + \alpha'\cdot
(n\%\delta).
\end{equation*}
The result follows from Euclidean division by $2^k$.
\end{proof}

Since $\alpha',\varepsilon\in\Zp$, we have $0\le\varepsilon\cdot(n/\delta) +
\alpha'\cdot(n\%\delta)$ for all $n\in\Zp$. There is an illuminating geometric
interpretation for the second part of the double
inequality in \cref{eq:FastDivision} that follows from mapping each $n\in\Zp$ to $P_n = (n/\delta, n
\%\delta)$ in the $xy$ plane. \cref{fig:Division} shows, for $\delta = 5$, some
points of particular interest labelled by their circled value. Notably, the
origin $P_0 = (0/5, 0\%5)$, $P_1 = (1/5, 1\%5)$ and $P_5 = (5/5, 5\%5)$.
Inequality $\varepsilon\cdot(n/\delta) + \alpha'\cdot(n\%\delta) < 2^k$ states
that $P_n$ is below the line $\varepsilon\cdot x + \alpha'\cdot y = 2^k$, which
is represented by the dotted line in \cref{fig:Division}, for $k = 8$, $\varepsilon = 4$
and $\alpha' = 52$. Hence, \cref{eq:FastDivision} means that a point below the
line corresponds to $n\in\Zp$ such that $n/\delta = \alpha'\cdot n/2^k$ and a
point above or on the line is related to $n$ such that $n/\delta\ne\alpha'\cdot
n/2^k$.
\begin{figure}[ht]
\hfill
\includegraphics[width=0.95\textwidth]{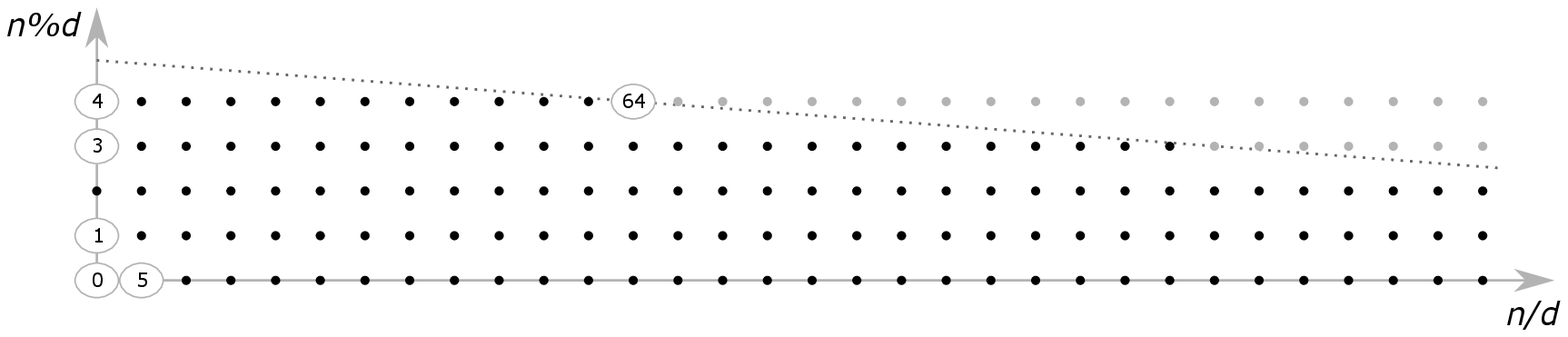}
\hfill{}
\caption{The geometry of replacing $n/\delta$ with $\alpha'\cdot n/2^k$, where
$\delta = 5$ and $k = 8$, $\alpha' = 2^k/\delta + 1 = 52$.}%
\label{fig:Division}
\end{figure}

If the slope of the dotted line is not too steep (more precisely, $-\varepsilon
\cdot\alpha'^{-1}\ge -1)$, then the graph makes it obvious that the smallest $N$ for
which $P_N$ is above or on the dotted line must lie on the line $y = \delta - 1
= 4$. In our case $N = 64$. For \cref{thm:FastEAFUp}, with $\alpha = 1$ and
$\beta = 0$, this means that $N = M(\delta - 1)$. In particular, we are not
interested in either $M(r)$ or $q(r)$ when $r\ne\delta - 1$. This and $\beta' =
0$ turn finding a fast EAF and its interval of applicability $[0, N[$ into an
$O(1)$ calculation rather than an $O(\delta)$ search as in the general case.
These geometric ideas are present although algebraically disguised in the proof
of \cref{thm:FastDivision}.

The result of \cref{thm:FastDivision} also appears in
\cite{CavagninoWerbrouck2007}. In addition to providing the aforementioned geometric insights
and an arguably simpler proof, we favour faster algorithms over range of
applicability. In other words, we reduce division to multiplication and bitwise shift
and find the largest $N$ for which this optimisation yields correct results for
all dividends in $[0, N[$. (In \cite{CavagninoWerbrouck2007} $N$ is called {\em
critical value} and is denoted by $N_{cr}$.)

\begin{theorem}
\label{thm:FastDivision}
Let $\delta,k\in\Zp$ with $\delta > 0$. Set $\alpha' = 2^k/\delta + 1$,
$\varepsilon = \delta - 2^k\%\delta$ and $N = \left\lceil\alpha'\cdot
\varepsilon^{-1}\right\rceil\cdot\delta - 1$. If $\varepsilon\le\alpha'$, then:
\begin{equation*}
n/\delta = \alpha'\cdot n/2^k, \quad\forall n\in[0, N[.
\end{equation*}
\end{theorem}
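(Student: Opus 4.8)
The plan is to reduce everything to the equivalence already packaged in \cref{lem:FastDivision} and then run a short case analysis governed by the hypothesis $\varepsilon\le\alpha'$. By \cref{lem:FastDivision}, the desired identity $n/\delta = \alpha'\cdot n/2^k$ holds for a given $n$ precisely when $0\le\varepsilon\cdot(n/\delta) + \alpha'\cdot(n\%\delta) < 2^k$. Since $\alpha',\varepsilon\in\Zp$, the lower bound is automatic for every $n\in\Zp$, so the whole theorem collapses to proving the single inequality $\varepsilon\cdot(n/\delta) + \alpha'\cdot(n\%\delta) < 2^k$ for all $n\in[0,N[$. Writing $q = n/\delta$ and $r = n\%\delta$, so that $0\le r\le\delta - 1$, and using the lemma's identity $\alpha'\cdot\delta = 2^k + \varepsilon$ to substitute $2^k = \alpha'\cdot\delta - \varepsilon$, this target rearranges into the cleaner form
\begin{equation*}
\varepsilon\cdot(q + 1) < \alpha'\cdot(\delta - r).
\end{equation*}

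Next I would unwind the definition of $N$. Setting $Q = \lceil\alpha'\cdot\varepsilon^{-1}\rceil$, the theorem's $N$ is exactly $\delta\cdot Q - 1$, so $n < N$ forces $n\le\delta\cdot Q - 2$ and hence bounds $q$. The defining inequalities of the ceiling, namely $Q - 1 < \alpha'\cdot\varepsilon^{-1}\le Q$ (equivalently $\varepsilon\cdot(Q-1) < \alpha'\le\varepsilon\cdot Q$), will be the only arithmetic facts I need about $Q$. I would then split on the value of $r$. When $r = \delta - 1$ (the top row $y = \delta - 1$ of \cref{fig:Division}), the bound on $n$ tightens to $q\le Q - 2$, so $\varepsilon\cdot(q+1)\le\varepsilon\cdot(Q-1) < \alpha' = \alpha'\cdot(\delta - r)$, and the inequality holds with no use of the hypothesis. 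When $r\le\delta - 2$, the bound gives only $q\le Q - 1$, so $\varepsilon\cdot(q+1)\le\varepsilon\cdot Q$, and here I would invoke $\varepsilon\le\alpha'$ together with $\delta - r\ge 2$ to chain $\varepsilon\cdot Q < \alpha' + \varepsilon\le 2\alpha'\le\alpha'\cdot(\delta - r)$.

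I expect the main obstacle to be conceptual rather than computational: recognising that $N$ is pinned down entirely by the top row $r = \delta - 1$, whereas the real content of the theorem is that no point $P_n$ on a \emph{lower} row can reach or cross the line $\varepsilon\cdot x + \alpha'\cdot y = 2^k$ before $n$ reaches $N$. This is precisely where the hypothesis $\varepsilon\le\alpha'$ — the slope condition $-\varepsilon\cdot\alpha'^{-1}\ge -1$ discussed before the statement — enters: it guarantees that the extra vertical slack $\delta - r\ge 2$ on a lower row more than compensates for the one additional admissible step in $q$. Getting the strict versus non-strict inequalities to line up in the $r\le\delta - 2$ case, where the ceiling supplies a strict bound $\varepsilon\cdot Q < \alpha' + \varepsilon$ that must survive the two non-strict estimates following it, is the one place where I would check the bookkeeping carefully.
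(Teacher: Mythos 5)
Your proof is correct and follows essentially the same route as the paper's: both reduce the theorem to \cref{lem:FastDivision} and then establish $\varepsilon\cdot(n/\delta)+\alpha'\cdot(n\%\delta)<2^k$ by a two-case corner analysis resting on the same two ingredients, the ceiling bound $\varepsilon\cdot\left\lceil\alpha'\cdot\varepsilon^{-1}\right\rceil<\alpha'+\varepsilon$ and the hypothesis $\varepsilon\le\alpha'$. The only difference is bookkeeping: you split on the remainder ($r=\delta-1$ versus $r\le\delta-2$) after rewriting the target as $\varepsilon\cdot(q+1)<\alpha'\cdot(\delta-r)$, whereas the paper splits on the quotient (via $n\le N-\delta$ versus $N-\delta<n<N$); these are transposed versions of the same argument.
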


\begin{proof}
Let $N' = N - \delta = \left(\left\lceil\alpha'\cdot\varepsilon^{-1}\right\rceil
- 2\right)\cdot\delta + \delta - 1$. We have $N'/\delta = \left\lceil\alpha'
\cdot\varepsilon^{-1}\right\rceil - 2$ and $N'\%\delta = \delta - 1$. Recall
from \cref{lem:FastDivision} that $\alpha'\cdot\delta - \varepsilon = 2^k$.
Therefore, using $\varepsilon\cdot\left\lceil\alpha'\cdot\varepsilon^{-1}\right
\rceil < \alpha' + \varepsilon$ we obtain:
\begin{equation}
\label{eq:N'}
\varepsilon\cdot(N'/\delta) + \alpha'\cdot(N'\%\delta) = \varepsilon\cdot\left
\lceil\alpha'\cdot\varepsilon^{-1}\right\rceil - 2 \cdot\varepsilon + \alpha'
\cdot\delta - \alpha' < 2^k.
\end{equation}
Let $n\in\Zp$ with $n < N$. We shall show that $0\le\varepsilon\cdot(n/\delta) +
\alpha'\cdot(n\%\delta) < 2^k$ and the result will follow from
\cref{lem:FastDivision}. Since $\varepsilon$, $\alpha'$ and $r$ are
non-negative, it is sufficient to show that $\varepsilon\cdot(n/\delta) + \alpha'
\cdot(n\%\delta) < 2^k$.

Assume first that $n\le N'$. As a result, $n/\delta\le N'/\delta$. Then use \cref{eq:N'} to
obtain:
\begin{equation*}
\varepsilon\cdot(n/\delta) + \alpha'\cdot(n\%\delta)\le\varepsilon\cdot(N'/
\delta) + \alpha'\cdot(\delta - 1) = \varepsilon\cdot(N'/\delta) + \alpha'\cdot
(N'\%\delta) < 2^k.
\end{equation*}

Now assume that $N' + 1\le n < N$. Then $\left(\left\lceil\alpha'\cdot\varepsilon^
{-1}\right\rceil - 1\right)\cdot\delta\le n < \left(\left\lceil\alpha'\cdot
\varepsilon^{-1}\right\rceil - 1\right)\cdot\delta + \delta - 1$. Hence, $n/
\delta = \left \lceil\alpha'\cdot\varepsilon^{-1}\right\rceil - 1 = N'/
\delta + 1 = N/\delta$. Since $n < N$, we must have $n\%\delta\le N\%\delta
- 1 = \delta - 2$. Therefore,
\begin{align*}
\varepsilon\cdot(n/\delta) + \alpha'\cdot(n\%\delta)
&\le \varepsilon\cdot(N'/\delta + 1) + \alpha'\cdot(\delta - 2)
 = \varepsilon\cdot(N'/\delta) + \alpha'\cdot(\delta - 1) + \varepsilon -
\alpha' \\
&\le \varepsilon\cdot(N'/\delta) + \alpha'\cdot(N'\%\delta).
\end{align*}
The result follows from \cref{eq:N'}.
\end{proof}

\begin{remark}
If $\delta$ is not a power of two and $2^k\ge\delta\cdot(\delta - 2)$, then
$\varepsilon\le\alpha'$. Indeed, under these hypotheses we have $2^k/\delta\ge
\delta - 2$ and $2^k\%\delta\ge 1$. It follows that $\delta - 2^k\%\delta\le
\delta - 1\le 2^k/\delta + 1$, {\em i.e.}, $\varepsilon\le\alpha'$.
\end{remark}

\begin{example}
\label{ex:FastDivisionYear}
Consider the division $n_2/1461$. For $k = 39$ the constants given by
\cref{thm:FastDivision} are $\alpha' = 376\s287\s347$, $\varepsilon = 79$ and $N
= 6\s958\s934\s390$. Since $\varepsilon\le\alpha'$, this theorem gives:
\begin{equation}
\label{eq:FastDivisionYear39}
n_2/1461 = 376\s287\s347\cdot n_2/2^{39}, \quad \forall n_2\in[0,
6\s958\s934\s390[.
\end{equation}
Following \cite{GranlundMontgomery1994}, major compilers replace the expression
on the left with the one on the right when $n_2$ is a 32-bit unsigned integer.
(It is worth mentioning that $k = 39$ is the smallest value for which $N\ge
2^{32}$.)

For $k = 32$, the constants given by \cref{thm:FastDivision} are $\alpha' = 2
\s939\s745$, $\varepsilon = 149$ and $N = 28\s825\s529$. Hence,
\begin{equation}
\label{eq:FastDivisionYear32}
n_2 / 1461 = 2\s939\s745\cdot n_2/2^{32}, \quad \forall n_2\in[0, 28\s825\s529[.
\end{equation}
Although the interval is smaller than the one for $k = 39$, it is large enough
for the Gregorian calendar algorithm that we present later on.
\cref{ex:FastResidualYear} will unveil the advantage of
\cref{eq:FastDivisionYear32} over \cref{eq:FastDivisionYear39}.
\end{example}

\subsection{Fast evaluation of residual functions}
\label{sec:FastResidual}

\Cref{thm:FastEAFUp,thm:FastEAFDown} suggest optimisations for EAFs,
generalising the current practice for divisions revisited by
\cref{thm:FastDivision}. Our next result extends the optimisation to residual
functions.

\begin{theorem}
\label{thm:FastResidual}
Let $f$ and $f'$ be EAFs given by $f(r) = (\alpha\cdot r + \beta)/\delta$ and
$f'(r) = (\alpha'\cdot r + \beta')/\delta'$, with $\delta\ge\alpha > 0$, and
assume that $f\equiv f'$ on $[a, b[$. If $a = \hat{f}(f(a))$ and $f'(a - 1) <
f'(a)$, then:
\begin{equation*}
\mathring{f}(r)/\alpha = \mathring{f'}(r)/\alpha' \quad \forall r\in[a, b[.
\end{equation*}
\end{theorem}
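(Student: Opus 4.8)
The plan is to reduce both sides to the common expression $r - \hat{f}(f(r))$. For the left-hand side this is immediate: since $\delta\ge\alpha>0$, item~\ref{it:Residual} of \cref{thm:EAF} gives $\mathring{f}(r)/\alpha = r - \hat{f}(f(r))$ for every $r$. It therefore suffices to prove $\mathring{f'}(r)/\alpha' = r - \hat{f}(f(r))$ for $r\in[a,b[$, and the whole argument will revolve around extracting this equality from \cref{lem:Residual} applied to the EAF $f'$.

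First I would fix $r\in[a,b[$, set $s = \hat{f}(f(r))$, and show that $s\in[a,b[$. The upper bound $s\le r<b$ is part of item~\ref{it:Residual}. For the lower bound, $r\ge a$ and the fact that $f$ is non-decreasing (valid because $\delta\ge\alpha>0$) give $f(r)\ge f(a)$; since $\hat{f}$ is strictly increasing and $a=\hat{f}(f(a))$ by hypothesis, this yields $s=\hat{f}(f(r))\ge\hat{f}(f(a))=a$. Having $s\in[a,b[$ is what lets me transfer information between $f$ and $f'$, since they agree on $[a,b[$.

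Next I would verify the two hypotheses needed to invoke \cref{lem:Residual} for the EAF $f'$, choosing its auxiliary function to be $\hat{f}$ and its index to be $f(r)$, so that the relevant point is $\hat{f}(f(r))=s$. The equality $f'(s)=f'(r)$ follows from $f(\hat{f}(f(r)))=f(r)$ (item~\ref{it:RInverse}) together with $f\equiv f'$ on $[a,b[$, as $s,r\in[a,b[$. The strict-increase condition $f'(s-1)<f'(s)$ splits into two cases. When $s>a$ we have $s-1\in[a,b[$, so $f'(s-1)=f(s-1)=f(\hat{f}(f(r))-1)=f(r)-1<f(r)=f(s)=f'(s)$, using item~\ref{it:RInverse} once more. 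When $s=a$ the point $s-1=a-1$ falls outside $[a,b[$ and the agreement $f\equiv f'$ is unavailable; here I would invoke the theorem's standing hypothesis $f'(a-1)<f'(a)$ directly. With both hypotheses secured, \cref{lem:Residual} applied to the variable $r$, which satisfies $f'(r)=f'(s)$, gives $\mathring{f'}(r)/\alpha' = r - s = r - \hat{f}(f(r))$, matching the left-hand side and completing the argument.

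I expect the boundary case $s=a$ to be the crux. Everywhere else the identity propagates for free from the agreement $f\equiv f'$, but at the leftmost plateau the point $a-1$ escapes the interval on which the two EAFs coincide; this is precisely the gap that the hypothesis $f'(a-1)<f'(a)$ is designed to fill, and confirming that nothing further is required is the main thing to get right. A secondary point is to check that \cref{lem:Residual} genuinely applies to $f'$ under the sign conventions it carries in our setting (so that dividing its residual by $\alpha'$ is meaningful), which holds for the EAFs arising in the applications.
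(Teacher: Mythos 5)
Your proof is correct and follows essentially the same route as the paper's: both reduce the claim to \cref{lem:Residual} applied to $f'$ with $g=\hat{f}$ and $q=f(r)$, after showing $\hat{f}(f(r))\in[a,b[$, transferring $f'(\hat{f}(f(r)))=f(r)$ via the agreement on $[a,b[$ and \cref{thm:EAF}-\labelcref{it:RInverse}, and handling the boundary case $\hat{f}(f(r))=a$ with the hypothesis $f'(a-1)<f'(a)$. The only cosmetic difference is that you obtain the upper bound $\hat{f}(f(r))\le r<b$ directly from \cref{thm:EAF}-\labelcref{it:Residual}, whereas the paper bounds it by $\hat{f}(f(b-1))\le b-1$; both are immediate.
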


\begin{proof}
Let $r\in[a, b[$ and set $q = f(r)$. From \cref{thm:EAF}-\labelcref{it:Residual},
we obtain $\mathring{f}(r)/\alpha = r - \hat{f}(q)$. We will finish the proof
by using \cref{lem:Residual} to show that $\mathring{f'}(r)/\alpha' = r -
\hat{f}(q)$.

Since $\delta\ge\alpha > 0$, both $f$ and $\hat{f}$ are non-decreasing. Hence,
from $a\le r\le b - 1$ we obtain $\hat{f}(f(a))\le\hat{f}(f(r))\le\hat{f}(f(b -
1))$. By assumption, $a = \hat{f}(f(a))$ and
\cref{thm:EAF}-\labelcref{it:Residual} (for $r = b - 1$) gives $b - 1\in[\hat{f}
(f(b - 1)), \hat{f}(f(b - 1) + 1)[$, in particular $\hat{f}(f(b - 1))\le b -
1$. Therefore, $a\le\hat{f}(f(r))\le b - 1$, in other words, $\hat{f}(q)\in[a, b[$.

Since $f\equiv f'$ in $[a, b[$ and $\hat{f}(q)$ is in this interval, we have
$f'(\hat{f}(q)) = f(\hat{f}(q))$. \Cref{thm:EAF}-\labelcref{it:RInverse} yields
$f(\hat{f}(q)) = q$ and thus $f'(\hat{f}(q)) = q = f(r)$.

Similarly, it can be shown that if $\hat{f}(q) - 1\in[a, b[$, then $f'(\hat{f}(q) -
1) = q - 1$, specifically, $f'(\hat{f}(q) - 1) < f'(\hat{f}(q))$. On the other
hand, if $\hat{f}(q) - 1\notin[a, b[$, then we must have $\hat{f}(q) = a$ and
the assumption on $f'$ also gives $f'(\hat{f}(q)  - 1) < f'(\hat{f}(q))$.

We have shown that $f'(\hat{f}(q) - 1) < f'(\hat{f}(q))$ and $f(r) = f'(\hat{f}
(q))$. Then, \cref{lem:Residual} applied to $f'$ (instead of $f$) and $g =
\hat{f}$ gives $\mathring{f'}(r)/\alpha' = r - \hat{f}(q)$.
\end{proof}

\begin{example}
\label{ex:FastResidualYear}
Consider the EAFs $f(n_2) = n_2/1461$, $\hat{f}(q) = q\cdot 1461$ and $f'(n_2) =
2\s939\s745\cdot n_2/2^{32}$. \Cref{eq:FastDivisionYear32} shows that $f(n_2) =
f'(n_2)$ for all $n_2\in [a, b[\ = [0, 28\s825\s529[$. Simple calculations give
$\hat{f}(f(0)) = 0$ and $f'(-1) = -1 < 0 = f'(0)$. Hence, it follows from
\cref{eq:FastDivisionYear32,thm:FastResidual} that:
\begin{equation*}
\label{eq:FastResidualYear}
n_2 / 1461 = 2\s939\s745\cdot n_2/2^{32} \quad\text{and}\quad
n_2 \% 1461 = 2\s939\s745\cdot n_2\%2^{32}/2\s939\s745, \quad \forall n_2\in[0,
28\s825\s529[.
\end{equation*}

Revisiting \cref{ex:ResidualYear} we show three alternatives that can be used to obtain $q_2 = (4
\cdot r_1 + 3)/1461$ and $r_2 = r_1 - 1461\cdot q_2/4$ side-by-side. The first
evaluates $r_2$ as expressed. The second uses the expression $(4\cdot r_1 + 3)\%
1461/4$ seen in \cref{ex:ResidualYear}. The third is similar but replaces $n_2/
1461$ and $n_2\%1461$ with the aforementioned alternatives.
\begin{equation*}
\begin{aligned}
n_2 &= 4\cdot r_1 + 3, \\
\\
q_2 &= n_2/1461, \\
r_2 &= r_1 - 1461\cdot q_2/4, \\
\end{aligned}
\quad\vrule\quad
\begin{aligned}
n_2 &= 4\cdot r_1 + 3, \\
\\
q_2 &= n_2/1461, \\
r_2 &= (n_2 - 1461\cdot q_2)/4,\ (\text{\em i.e., } r_2 = n_2 \% 1461/4) \\
\end{aligned}
\quad\vrule\quad
\begin{aligned}
n_2 &= 4\cdot r_1 + 3, \\
u_2 &= 2\s939\s745\cdot n_2, \\
q_2 &= u_2/2^{32}, \\
r_2 &= u_2\%2^{32}/2\s939\s745/4. \\
\end{aligned}
\end{equation*}
We are interested in instructions emitted by compilers for the three listings
above but, for clarity, we did not perform typical compiler optimisations ({\em
e.g.}, substituting division by $4$ and $2^{32}$ with bitwise shifts). However,
in the middle listing we did substitute $n_2\%1461$ with $n_2 - 1461\cdot q_2$
for greater resemblance to emitted instructions and fairer comparison with
the first listing. Compilers reduce division by $1461$
to multiplication and bitwise shift, the same operations used to get $u_2$ and $q_2$ in
the third listing. Therefore, up to and including the
calculation of $q_2$, the instructions for the three columns are the same and
only the constants differ.

The calculation of $r_2$ in the first two listings requires three instructions:
one multiplication, one subtraction and one bitwise shift $(/4$). For the third
listing, there {\em appear to be} four: one bitwise \texttt{and} ($\%2^{32}$), one
multiplication and one bitwise shift (compilers' reduction of $/2\s939\s745$), and another bitwise shift
($/4$). In fact, these two bitwise shifts collapse into one. Hence, there are
also three instructions for the third listing. We conclude that the number of
instructions is exactly the same for the three listings, but this is only part
of the story. The breakthrough of the third listing is removing the dependency of
$r_2$ on $q_2$, seen in the first two listings, which forces the CPU to wait for the
calculation of $q_2$ to finish before starting that of $r_2$. In listing three,
once $u_2$ is obtained, the evaluations of $q_2$ and $r_2$ can start
concurrently. There is a small but worthwhile price to pay for this
parallelisation and even that can be avoided in some platforms as we shall see
now.

The calculations of $q_2$ and $r_2$ need $u_2$ to be stored in two different
registers, which implies a \texttt{mov} from the register on which $u_2$ was
originally obtained. This does not necessarily increase the number of
instructions because instead of leaving the compiler to reduce division by
$1461$ to multiplication and a bitwise shift of $39$ bits (see
\cref{eq:FastDivisionYear39}), we perform our own reduction (see
\cref{eq:FastDivisionYear32}) using $32$ bits. For backward compatibility,
x86-64 CPUs provide \texttt{mov} instructions that reset the upper $32$ bits of
$64$ registers, effectively performing a \texttt{mov} and a bitwise \texttt{and}
at the same time. Hence, the operation $\%2^{32}$ in the calculation of $r_2$
might come for free.
\end{example}

\begin{example}
\label{ex:FastResidualMonth}
Consider the EAFs $f(r_2) = (5\cdot r_2 + 461)/153$, $\hat{f}(m_0) = (153\cdot
m_0 - 457)/5$, and $f'(r_2) = (2141\cdot r_2 + 197913)/2^{16}$.
\Cref{eq:FastEAFDownMonth} states that $f(r_2) = f'(r_2)$ for all $r_2\in[a, b[\
= [0, 734[$. Simple calculations give $\hat{f}(f(0)) = 0$ and $f'(-1) = 2 < 3 =
f'(0)$. It follows from \cref{eq:FastEAFDownMonth,thm:FastResidual} that:
\begin{align*}
(5\cdot r_2 + 461)/153 &= (2141\cdot r_2 + 197913)/2^{16} \quad\text{and} \\
(5\cdot r_2 + 461)\%153/5 &= (2141\cdot r_2 + 197913)\%2^{16}/2141,
\quad\forall r_2\in[0, 734[.
\end{align*}
\Cref{ex:ResidualMonth} shows that the left side of the last equation matches
$d_0 = r_2 - (153\cdot m_0 - 457)/5$. Hence, as in \cref{ex:FastResidualYear},
we have three alternatives for evaluating this quantity. The analysis of
\cref{ex:FastResidualYear} also holds here and once $n_3 = 2141\cdot r_2 +
197913$ is obtained the calculations of $m_0 = n_3/2^{16}$ and $d_0 = n_3\%2^
{16}/2141$ can start concurrently. Furthermore, in some platforms, the operation
$\%2^{16}$ might come for free.
\end{example}

\begin{example}
\label{ex:FastTime}
In line with the previous examples, we have
\begin{align*}
n/3600 &= 1\s193\s047\cdot n/2^{32}, &
n\%3600 &= 1\s193\s047\cdot n\%2^{32}/1\s193\s047, & \forall n&\in[0, 2\s257\s199[;
\\
n/60 &= 71\s582\s789\cdot n/2^{32}, &
n\%60 &= 71\s582\s789\cdot n\%2^{32}/71\s582\s789, &
\forall n&\in[0, 97\s612\s919[;
\\
n/10 &= 429\s496\s730\cdot n/2^{32}, &
n\%10 &= 429\s496\s730\cdot n\%2^{32}/429\s496\s730, &
\forall n&\in[0, 1\s073\s741\s829[.
\end{align*}
Compilers create a data dependency when evaluating the left side of the equalities above
but not when the expressions on the right are used. The first two lines can be used
in conversions of the seconds elapsed since the start of the day, a quantity in $[0,
86400[$, to hours, minutes and seconds. The third line can be used in
conversions of non-negative integers up to $9$ digits into their decimal
representations.
\end{example}

\subsection{Quick remainder -- the special case \texorpdfstring{{$\alpha = 1$,
$\beta = 0$}}{alpha = 1, beta = 0}}

Again for this particular case, the EAF $f(n) = (\alpha\cdot n + \beta)/\delta$
simplifies to $f(n) = n/\delta$ and its residual function simplifies to $\mathring{f}(n) =
n\%\delta$. In this section we use \cref{thm:FastDivision} and
\cref{thm:FastResidual} to derive an efficient way to calculate remainders.

Formally, \cref{thm:FastDivision} states how to replace $n/\delta$ with $\alpha'
\cdot n/2^k$, where $\alpha'\approx 2^k\cdot\delta^{-1}$ and $k\in\Zp$.
\cref{thm:FastResidual} then gives the equality $n\%\delta = (\alpha'\cdot n\%2^k)/
\alpha'$ and \cref{thm:FastDivision}, again, suggests replacing division by
$\alpha'$ with multiplication by an approximation of $2^k\cdot\alpha'^{-1}$ and
division by $2^k$. It turns out that $\delta\approx 2^k \cdot\alpha'^{-1}$ is the
approximation we need and we obtain $n\%\delta = \delta\cdot(\alpha'\cdot n\%2^
k)/2^k$. This is the idea behind our next theorem.

\begin{theorem}
\label{thm:FastRemainder}
Let $\delta,k\in\Zp$ with $\delta > 0$. Set $\alpha' = 2^k/\delta + 1$,
$\varepsilon = \delta - 2^k\%\delta$ and $M = \left\lceil 2^k\cdot\varepsilon^
{-1}\right\rceil$. If $\varepsilon\le\alpha'$, then:
\begin{equation}
\label{eq:FastRemainder}
n\%\delta = \delta\cdot(\alpha'\cdot n\%2^k)/2^k, \quad\forall n\in[0, M[.
\end{equation}
\end{theorem}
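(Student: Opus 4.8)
The plan is to reduce everything to a single identity, $\delta\cdot m = 2^k\cdot(n\%\delta) + \varepsilon\cdot n$ where $m = \alpha'\cdot n\%2^k$, and then read off the claim by Euclidean division by $2^k$. This is the rigorous counterpart of the two informal reductions announced before the statement --- first $n\%\delta = (\alpha'\cdot n\%2^k)/\alpha'$ through \cref{thm:FastResidual}, then a second division (by $\alpha'$) handled by \cref{thm:FastDivision}. I would collapse these two steps into one computation, which avoids having to track a second interval of validity.

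First I would record the elementary facts $\varepsilon\ge 1$ (since $2^k\%\delta\le\delta - 1$) and $\alpha'\ge 1$, so that $M = \lceil 2^k\cdot\varepsilon^{-1}\rceil$ is well defined and $\varepsilon,\alpha' > 0$. Writing $N = \lceil\alpha'\cdot\varepsilon^{-1}\rceil\cdot\delta - 1$ for the critical value of \cref{thm:FastDivision}, I would then establish the containment $[0, M[\ \subseteq\ [0, N[$, so that the whole argument may run on $[0, N[$. There \cref{thm:FastDivision} applies --- precisely because $\varepsilon\le\alpha'$ --- and gives $\alpha'\cdot n/2^k = n/\delta$. Substituting this into the algebraic identity $\alpha'\cdot n - 2^k\cdot(n/\delta) = \varepsilon\cdot(n/\delta) + \alpha'\cdot(n\%\delta)$ derived inside the proof of \cref{lem:FastDivision}, and observing that its left-hand side is then exactly $m = \alpha'\cdot n\%2^k$, yields $m = \varepsilon\cdot(n/\delta) + \alpha'\cdot(n\%\delta)$.

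Multiplying by $\delta$ and using $\alpha'\cdot\delta = 2^k + \varepsilon$ (also from \cref{lem:FastDivision}) collapses the $(n/\delta)$ and $(n\%\delta)$ contributions back into $n$, giving $\delta\cdot m = 2^k\cdot(n\%\delta) + \varepsilon\cdot n$. For $n\in[0, M[$ one has $n\le\lceil 2^k\cdot\varepsilon^{-1}\rceil - 1 < 2^k\cdot\varepsilon^{-1}$, hence $0\le\varepsilon\cdot n < 2^k$; since $2^k\cdot(n\%\delta)$ is a multiple of $2^k$, the uniqueness of Euclidean division forces $\delta\cdot m/2^k = n\%\delta$, which is the assertion.

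The one genuinely non-routine point is the containment $M\le N$. I expect to get it from $\lceil\alpha'\cdot\varepsilon^{-1}\rceil\cdot\delta\ge(\alpha'\cdot\delta)\cdot\varepsilon^{-1} = (2^k + \varepsilon)\cdot\varepsilon^{-1} = 2^k\cdot\varepsilon^{-1} + 1$; as the left-hand side is an integer it is then at least $\lceil 2^k\cdot\varepsilon^{-1} + 1\rceil = \lceil 2^k\cdot\varepsilon^{-1}\rceil + 1 = M + 1$, whence $N\ge M$. A more literal alternative is the two-theorem composition the text hints at: apply \cref{thm:FastResidual} and then \cref{thm:FastDivision} to the divisor $\alpha'$, whose round-up multiplier $2^k/\alpha' + 1$ equals exactly $\delta$ (from $2^k = \alpha'\cdot\delta - \varepsilon$ with $0 < \varepsilon\le\alpha'$) and whose associated parameter is again $\varepsilon$. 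That route is valid but requires bounding the intermediate quantity $\alpha'\cdot n\%2^k$ against a second critical value, which the direct computation above sidesteps.
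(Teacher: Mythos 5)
Your proof is correct, and its concluding step takes a genuinely different route from the paper's. The two arguments coincide up to the midpoint: the same bound $M\le N$ (derived from $\alpha'\cdot\delta = 2^k + \varepsilon$), the same appeal to \cref{thm:FastDivision} on $[0, N[$ (which is where $\varepsilon\le\alpha'$ enters), and the same identification $m = \alpha'\cdot n\%2^k = \varepsilon\cdot(n/\delta) + \alpha'\cdot(n\%\delta)$. From there the paper follows the two-step plan sketched before the statement: it invokes \cref{thm:FastResidual} (with $\alpha = 1$ and $[a, b[\ = [0, N[$) to obtain $n\%\delta = (\alpha'\cdot n\%2^k)/\alpha'$, and then removes the division by $\alpha'$ through a second, role-reversed application of \cref{lem:FastDivision}, observing that $\delta = 2^k/\alpha' + 1$ and $\varepsilon = \alpha' - 2^k\%\alpha'$, together with the bound $\varepsilon\cdot(m/\alpha') + \delta\cdot(m\%\alpha') = \varepsilon\cdot n < 2^k$. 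You sidestep both of these tools: multiplying your identity by $\delta$ and using $\alpha'\cdot\delta = 2^k + \varepsilon$ collapses it to $\delta\cdot m = 2^k\cdot(n\%\delta) + \varepsilon\cdot n$, and since $0\le\varepsilon\cdot n < 2^k$ for $n\in[0, M[$, uniqueness of Euclidean division by $2^k$ yields the claim outright. Your route is shorter and more elementary --- it needs neither \cref{thm:FastResidual} nor the duality observation, and it makes transparent that the sole purpose of the bound defining $M$ is to keep the remainder term $\varepsilon\cdot n$ below $2^k$. What the paper's route buys is structural: it realises the theorem as the literal composition of its two preceding optimisation results (fast division, then fast residual), it produces the intermediate identity $n\%\delta = (\alpha'\cdot n\%2^k)/\alpha'$, which is of independent interest, and it exhibits the symmetry under which $\delta$ and $\alpha'$ exchange roles --- precisely the heuristic $\delta\approx 2^k\cdot\alpha'^{-1}$ that motivates the statement.
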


\begin{proof}
Set $N = \left\lceil\alpha'\cdot\varepsilon^{-1}\right\rceil\cdot\delta - 1\ge
\alpha'\cdot\varepsilon^{-1}\cdot\delta - 1$. Recall from
\cref{lem:FastDivision} that $\alpha'\cdot\delta = 2^k + \varepsilon$ and thus $N
\ge(2^k + \varepsilon)\cdot\varepsilon^{-1} - 1 = 2^k\cdot\varepsilon^{-1}$. It
follows that $N\ge M$. \Cref{thm:FastDivision} gives:
\begin{equation}
\label{eq:div}
n/\delta = \alpha'\cdot n/2^k, \quad\forall n\in[0, N[.
\end{equation}

Hence, for the EAFs $f(n) = n/\delta$ and $f'(n) = \alpha'\cdot n/2^k$,
\cref{eq:div} states that $f\equiv f'$ on $[0, N[$. Simple calculations give $\hat{f}(f
(0)) = \delta\cdot f(0) = 0$ and $f'(-1) < f'(0)$.
\cref{thm:FastResidual} (for $\alpha = 1$ and $[a, b[\ = [0, N[$) therefore yields:
\begin{equation}
\label{eq:mod}
n\%\delta = (\alpha'\cdot n\% 2^k)/\alpha', \quad\forall n\in[0, N[.
\end{equation}

Let $n\in[0, M[$ and set $m = \varepsilon\cdot(n/\delta) + \alpha'\cdot(n\%
\delta)$. Since $n < N$, we have $n/\delta\le N/\delta = \left\lceil\alpha'\cdot
\varepsilon^{-1}\right\rceil - 1 < \alpha'\cdot\varepsilon^{-1}$. It follows
that $0\le\varepsilon\cdot(n/\delta)<\alpha'$. Hence, $m/\alpha' = n\%\delta$
and $m\%\alpha' = \varepsilon\cdot(n/\delta)$. Moreover, \cref{eq:div,%
lem:FastDivision} give $0\le m < 2^k$. Since $\alpha'\cdot n = \alpha'\cdot
(\delta\cdot(n/ \delta) + n\%\delta) = 2^k\cdot(n/\delta) + \varepsilon\cdot(n/
\delta) + \alpha'\cdot(n\%\delta) = 2^k\cdot(n/\delta) + m$, we conclude that
$\alpha'\cdot n\%2^k = m$.

Since $n < M$, we have $\varepsilon\cdot n < 2^k$, otherwise $n\ge 2^k\cdot
\varepsilon^{-1}$ and then, $n\ge M$. Therefore:
\begin{equation}
\label{eq:m}
0\le\varepsilon\cdot(m/\alpha') + \delta\cdot(m\%\alpha')
= \varepsilon\cdot(n\%\delta) + \delta\cdot\varepsilon\cdot(n/\delta)
= \varepsilon\cdot n
< 2^k.
\end{equation}

Using $\alpha'\cdot\delta = 2^k + \varepsilon$ again gives $2^k = \delta\cdot
\alpha' - \varepsilon = (\delta - 1)\cdot\alpha' + \alpha' - \varepsilon$ and,
since $0\le\alpha' - \varepsilon < \alpha'$, we
obtain $2^k/\alpha' = \delta - 1$ and $2^k\%\alpha' = \alpha' - \varepsilon$,
that is, $\delta = 2^k/\alpha' + 1$ and $\varepsilon = \alpha' - 2^k\%\alpha'$.
Therefore, \cref{eq:m} and \cref{lem:FastDivision} (with $\alpha'$ and $\delta$
interchanged) give $m/\alpha' = \delta\cdot m/2^k$, namely, $(\alpha'\cdot n\%
2^k)/\alpha' = \delta\cdot(\alpha'\cdot n\% 2^k)/2^k$.

Finally, \cref{eq:mod} yields $n\%\delta = (\alpha'\cdot n\% 2^k)/\alpha' =
\delta\cdot(\alpha'\cdot n\% 2^k)/2^k$.
\end{proof}

\begin{example}
Revisiting \cref{ex:FastTime}, \cref{thm:FastRemainder} gives
\begin{align*}
n\%3600 &= 3600\cdot(1\s193\s047\cdot n\%2^{32})/2^{32},
& \forall n&\in[0, 2\s255\s761[;
\\
n\%60 &= 60\cdot(71\s582\s789\cdot n\%2^{32})/2^{32}, &
\forall n&\in[0, 97\s612\s894[;
\\
n\%10 &= 10\cdot(429\s496\s730\cdot n\%2^{32})/2^{32}, &
\forall n&\in[0, 1\s073\s741\s824[.
\end{align*}
The expressions on the right side of the equals sign provide efficient ways of
evaluating remainders. However, greater benefits are achieved when they are used
in conjunction with the quotient expressions presented in \cref{ex:FastTime}.
\end{example}

The equality in \cref{eq:FastRemainder} also appears in
\cite{LemireKaserKurz2019}. As in other works, it focuses on obtaining the value $k\in
\Zp$ for which the equality holds on an interval of the form $[0, 2^w[$ or, in
other words, $2^w\le M$ as the next result shows.

\begin{corollary}
Let $\delta, l, w\in\Zp$ with $0 < \delta < 2^w$ and $\delta - 2^{w + l}\%\delta
\le 2^l$. Set $\alpha' = 2^{w + l}/\delta + 1$. Then,
\begin{equation*}
n\%\delta = \delta\cdot(\alpha'\cdot n\%2^{w + l})/2^{w + l}, \quad\forall
n\in[0, 2^w[.
\end{equation*}
\end{corollary}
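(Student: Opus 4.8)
The plan is to apply Theorem~\ref{thm:FastRemainder} directly with the substitution $k = w + l$. The corollary is precisely a specialisation in which the modulus $2^k$ from the theorem is split as $2^{w+l}$, and the interval $[0, M[$ is shown to contain $[0, 2^w[$. First I would set $k = w + l$ and observe that the hypothesis $\delta - 2^{w+l}\%\delta \le 2^l$ is exactly the condition $\varepsilon \le \alpha'$ required by the theorem, once I unfold the definitions $\varepsilon = \delta - 2^{w+l}\%\delta$ and $\alpha' = 2^{w+l}/\delta + 1$. Indeed, since $\alpha' = 2^{w+l}/\delta + 1 \ge 2^l$ (because $0 < \delta < 2^w$ forces $2^{w+l}/\delta \ge 2^l/1 \cdot (\text{something}) \ge 2^l - 1$, which I would verify carefully), the chain $\varepsilon \le 2^l \le \alpha'$ gives the needed inequality.

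Next I would handle the interval. Theorem~\ref{thm:FastRemainder} gives validity on $[0, M[$ with $M = \left\lceil 2^{w+l}\cdot\varepsilon^{-1}\right\rceil$, so it suffices to show $2^w \le M$, which by the ceiling bound reduces to $2^w \le 2^{w+l}\cdot\varepsilon^{-1}$, i.e. $\varepsilon \le 2^l$. This is again exactly the stated hypothesis $\delta - 2^{w+l}\%\delta \le 2^l$. Thus both of the theorem's requirements translate transparently into the corollary's single compact hypothesis, and the conclusion $n\%\delta = \delta\cdot(\alpha'\cdot n\%2^{w+l})/2^{w+l}$ for all $n\in[0, 2^w[$ follows by restricting the theorem's conclusion to the subinterval $[0, 2^w[ \subseteq [0, M[$.

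The one point requiring genuine care — and the main obstacle — is verifying the lower bound $\alpha' \ge 2^l$ so that $\varepsilon \le 2^l \le \alpha'$ actually closes into $\varepsilon \le \alpha'$. From $\delta < 2^w$ we get $2^{w+l}/\delta \ge 2^{w+l}/2^w \cdot (\text{adjustment for flooring})$; more precisely, $\delta < 2^w$ implies $\delta \le 2^w - 1$, so $2^{w+l}/\delta \ge 2^{w+l}/(2^w-1) \ge 2^l$, whence $\alpha' = 2^{w+l}/\delta + 1 \ge 2^l + 1 > 2^l$. I would state this step explicitly rather than leave it implicit, since it is the only place the hypothesis $\delta < 2^w$ is used and it is what guarantees we may invoke the theorem at all.

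\begin{proof}
Set $k = w + l$. With this choice, the quantities $\alpha' = 2^{w+l}/\delta + 1$ and $\varepsilon = \delta - 2^{w+l}\%\delta$ coincide with those in \cref{thm:FastRemainder}. Since $0 < \delta < 2^w$, we have $\delta \le 2^w - 1 < 2^w$, so $2^{w+l}/\delta \ge 2^{w+l}/2^w = 2^l$ after accounting for flooring, and hence $\alpha' = 2^{w+l}/\delta + 1 > 2^l$. Combined with the hypothesis $\varepsilon = \delta - 2^{w+l}\%\delta \le 2^l$, this gives $\varepsilon \le 2^l < \alpha'$, so the assumption $\varepsilon \le \alpha'$ of \cref{thm:FastRemainder} holds. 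Applying that theorem yields
\begin{equation*}
n\%\delta = \delta\cdot(\alpha'\cdot n\%2^{w+l})/2^{w+l}, \quad\forall n\in[0, M[,
\end{equation*}
where $M = \left\lceil 2^{w+l}\cdot\varepsilon^{-1}\right\rceil$. From $\varepsilon \le 2^l$ we obtain $2^{w+l}\cdot\varepsilon^{-1} \ge 2^{w+l}/2^l = 2^w$, and therefore $M = \left\lceil 2^{w+l}\cdot\varepsilon^{-1}\right\rceil \ge 2^w$. Thus $[0, 2^w[\ \subseteq [0, M[$, and the equality above holds in particular for all $n\in[0, 2^w[$.
\end{proof}
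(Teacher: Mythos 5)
Your proposal is correct and follows essentially the same route as the paper: apply \cref{thm:FastRemainder} with $k = w + l$, verify $\varepsilon \le \alpha'$ from $\delta < 2^w$ and $\varepsilon \le 2^l$, and confirm $M = \left\lceil 2^{w+l}\cdot\varepsilon^{-1}\right\rceil \ge 2^w$ so that $[0, 2^w[\ \subseteq [0, M[$. The only cosmetic difference is that the paper bounds $\varepsilon \le 2^k/\delta < \alpha'$ via $\delta\cdot\varepsilon < 2^k$, whereas you first show $\alpha' > 2^l$ and then chain with $\varepsilon \le 2^l$; both are the same estimate in slightly different order.
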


\begin{proof}
Set $k = w + l$, $\varepsilon = \delta - 2^k\%\delta$ and $M = \left\lceil 2^k
\cdot\varepsilon^{-1}\right\rceil$. From \cref{thm:FastRemainder}, it is sufficient to
show that $\varepsilon\le\alpha'$ and $2^k\le M$. Since $\delta < 2^w$ and
$\varepsilon\le 2^l$, we have $\delta\cdot\varepsilon < 2^{w + l} = 2^k$ and
thus, $\varepsilon\le 2^k/\delta < \alpha'$. We also have $M\ge 2^k\cdot
\varepsilon^{-1}\ge 2^k\cdot 2^{-l} = 2^w$.
\end{proof}

\section{Calendars}
\label{sec:Calendars}

We begin this section with a mathematical framework that can be applied to all calendars, before specifically turning our attention to the Gregorian calendar.

\begin{definition}
A {\bf date} is an ordered pair $(y, x)\in\Z\times X$ and a {\bf calendar} is a
non-empty set of dates $C\subset\Z\times X$ such that for all $y\in\Z$ the set
$\{ x\in X\ ;\  (y, x)\in C\}$ is finite.
\end{definition}

\begin{example}
\label{ex:ymd}
In the most common interpretation, $y$ is called {\bf year} and $x$ is the {\bf
day of the year}. Moreover, $X=\Z^2$ and $x = (m, d)$ is broken down into
sub-coordinates, namely {\bf month} $m$ and {\bf day (of the month)} $d$.
\end{example}

\begin{example}
\label{ex:cymd}
In another interpretation the coordinates of a date $(c, x)\in\Z\times\Z^3$ are
called {\bf century} $c$ and {\bf day of the century} $x$ and the sub-coordinates of
$x = (y, m, d)$ are the {\bf year of the century} $y$, {\bf month} $m$ and {\bf day}
$d$.
\end{example}

We mostly use the interpretation of \cref{ex:ymd}, although that of \cref{ex:cymd}
appears, in passing, in \cref{prp:Century,prp:Year}. To ease notation, a date
$(y, (m, d))\in\Z\times \Z^2$ is identified with $(y, m, d)\in\Z^3$.

Under the lexicographical order of $\Z^3$ (considered throughout), any calendar
$C\subset\Z^3$ is totally ordered and any bounded interval in $C$ is finite.
This observation supports the following definition.

\begin{definition}
Let $C\subset\Z^3$ be a calendar and $e\in C$. The function $\rho:C\rightarrow
\Z$ given by $\rho(x) = \#[e, x[$, if $x\ge e$, and $\rho(x) = -\#[x,
e[$, if $x < e$, is called the {\bf rata die} function with {\bf epoch} $e$.
\end{definition}

The term rata die is usually \cite{ReingoldDershowitz2018} applied to the
particular case where the epoch is 31 December 0000 in the proleptic Gregorian
calendar but we shall use it regardless of the epoch.

Rata die functions are strictly increasing and thus invertible on their images.
We are interested in developing algorithms to evaluate rata die functions and their
inverses.

\subsection{The Gregorian calendar}
\label{sec:Gregorian}

This calendar is used by most countries and is familiar to most readers.
Nevertheless, we recall some of its properties. Although it is meaningless to
refer to dates in the Gregorian calendar prior to its introduction in $1582$, we
can extrapolate the calendar backwards indefinitely, yielding the so called {\bf
proleptic Gregorian calendar}. (See Richards \cite{Richards1998}.) For the sake
of brevity, we drop the adjective {\em proleptic} and refer to the
Gregorian calendar even when dealing with dates prior to $1582$.

A year $y\in\Z$ is said to be a {\bf leap year} if $y\%4 = 0$ and $y\%100
\ne 0$ or $y\%400 = 0$. The Gregorian calendar is defined by $G = \{ (y, m, d)
\in\Z^3\ ;\ m\in[1, 12]\text{ and } d\in[1, L(y, m)] \}$, where $L(y, m)$ is the
{\bf length of month} $m$ of year $y$ as given by \cref{tab:Months}.
\begin{table}[ht]
\caption{Months of the Gregorian calendar.}
\label{tab:Months}
\hfill
\begin{tabular}{clc}
\hline\noalign{\smallskip}
$m$ & Name & $L(y, m)$ \\
\hline\noalign{\smallskip}
1 & January  & 31 \\
2 & February & 28 or 29 {\scriptsize $^{(a)}$} \\
3 & March    & 31 \\
4 & April    & 30 \\
\hline
\end{tabular}
\hfill
\begin{tabular}{clc}
\hline\noalign{\smallskip}
$m$ & Name & $L(y, m)$ \\
\hline\noalign{\smallskip}
5 & May    & 31 \\
6 & June   & 30 \\
7 & July   & 31 \\
8 & August & 31 \\
\hline
\end{tabular}
\hfill
\begin{tabular}{clc}
\hline\noalign{\smallskip}
$m$ & Name & $L(y, m)$ \\
\hline\noalign{\smallskip}
 9 & September & 30 \\
10 & October   & 31 \\
11 & November  & 30 \\
12 & December  & 31 \\
\hline
\end{tabular}
\hfill{}

\centerline{
\footnotesize {\scriptsize ${(a)}$} $L(y, 2) = 29$ if $y$ is a leap year, or
$28$ otherwise.
}
\end{table}

From the definition of length of month we obtain:
\begin{equation}
\label{eq:YearLength}
\sum_{m = 1}^{12} L(y, m) =
\begin{cases}
366, & \text {if $y$ is a leap year}, \\
365, & \text {otherwise}.
\end{cases}
\end{equation}

\subsection{The computational calendar}
\label{sec:Computational}

Borrowing Hatcher's \cite{Hatcher1984, Hatcher1985} terminology, we
introduce a computational calendar that allows efficient evaluations of rata
die functions and their inverses. This calendar is derived from $G$ by rotating
the months to place February last and setting its minimum to $e_0 = (0, 3,
0)$.

Let $P_1:\Z^3\rightarrow\Z^3$ be the map defined by:
\begin{equation*}
P_1(y_1, m_1, d_1) = (y_1 - {\bf 1}_{\{m_1\le 2\}},
m_1 + 12\cdot{\bf 1}_{\{m_1\le 2\}}, d_1 - 1),
\quad\forall (y_1, m_1, d_1)\in\Z^3.
\end{equation*}

\begin{remark}
\label{rmk:P1}
It is easy to see that $P_1$ is a strictly increasing bijection with
\begin{equation}
\label{eq:P1Inverse}
P_1^{-1}(y_0, m_0, d_0) = (y_0 + {\bf 1}_{\{m_0\ge 13\}},
m_0 - 12\cdot{\bf 1}_{\{m_0\ge 13\}}, d_0 + 1),
\quad\forall (y_0, m_0, d_0)\in G_0.
\end{equation}
\end{remark}

The {\bf computational calendar} is $G_0 = \{P_1(x_1)\ ;\ x_1\in G\text{ and }
P_1(x_1)\ge e_0 \}$. By setting $e_1 = (0, 3, 1)\in G$ and noting that
$P_1(e_1) = e_0$, the monotonicity of $P_1$ gives $G_0 = P(G_1)$, where $G_1 =
\{ x_1\in G\ ;\ x_1 \ge e_1\}$.

\Cref{fig:ComputationalCalendar} illustrates the relationships between years
and months in the Gregorian and computational calendars. Month $m_0 = 14$ of
year $0$ (generally, $y_0$) of the computational calendar corresponds
to $m_1 = 2$ (February) of year $1$ (generally, $y_0 + 1$) of the Gregorian
calendar. Hence, if $y_1 = y_0 + 1$ is a leap year in the Gregorian calendar,
then from \cref{eq:YearLength}, year $y_0$ of the computational calendar will have
$366$ days; otherwise it will have $365$ days.
\begin{figure}[ht]
\hfill
\includegraphics[width=0.8\textwidth]{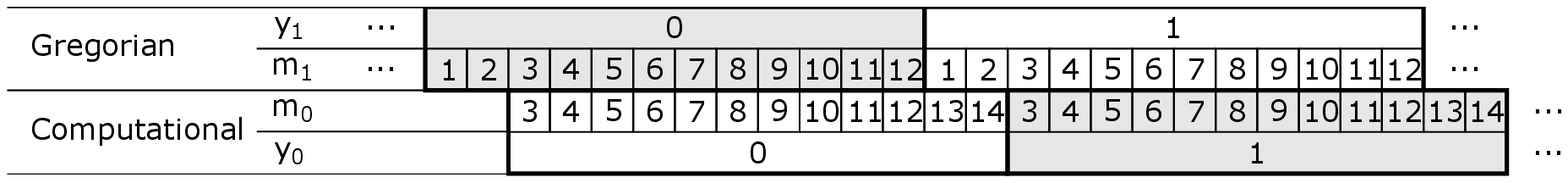}
\hfill{}
\caption{Years $0$ and $1$ in the Gregorian and computational calendars.}%
\label{fig:ComputationalCalendar}
\end{figure}

If $(y_0, m_0, d_0)\in G_0$, then $m_0\in[3, 15[$ and $d_0\in[0, L(y_1, m_1)[$
where $(y_1, m_1, d_1) = P_1^{-1}(y_0, m_0, d_0)$. (See
\cref{tab:ComputationalCalendar}.)
\begin{table}[ht]
\caption{Months of the computational calendar and their lengths.}
\label{tab:ComputationalCalendar}
\hfill
\begin{tabular}{clc}
\hline\noalign{\smallskip}
$m_0$ & Name & $L(y_1, m_1)${\scriptsize $^{(a)}$}  \\
\hline\noalign{\smallskip}
 3 & March & 31 \\
 4 & April & 30 \\
 5 & May   & 31 \\
 6 & June  & 30 \\
 7 & July  & 31 \\
\hline
\end{tabular}
\hfill
\begin{tabular}{clc}
\hline\noalign{\smallskip}
$m_0$ & Name & $L(y_1, m_1)${\scriptsize $^{(a)}$}  \\
\hline\noalign{\smallskip}
 8 & August    & 31 \\
 9 & September & 30 \\
10 & October   & 31 \\
11 & November  & 30 \\
12 & December  & 31 \\
\hline
\end{tabular}
\hfill
\begin{tabular}{clc}
\hline\noalign{\smallskip}
$m_0$ & Name & $L(y_1, m_1)${\scriptsize $^{(b)}$}  \\
\hline\noalign{\smallskip}
13 & January  & 31 \\
14 & February & 28 or 29 {\scriptsize $^{(c)}$} \\
\\
\\
\\
\hline
\end{tabular}
\hfill{}

{
\footnotesize
{\scriptsize ${(a)}$} $y_1 = y_0$, $m_1 = m_0$.
{\scriptsize ${(b)}$} $y_1 = y_0 + 1$, $m_1 = m_0 - 12$.
{\scriptsize ${(c)}$} $L(y_1, m_1) = 29$ if $y_1 = y_0 + 1$ is a leap year, or
$L(y_1, m_1) = 28$ otherwise.
}
\end{table}

\begin{remark}
\label{rmk:MonthPeriod}
\Cref{tab:ComputationalCalendar} shows that, except for $m_0 = 14$, $m_0\mapsto
L(y_1, m_1)$ is a periodic function and there are $153 = 31 + 30 + 31 + 30 + 31$
days in each $5$-month period.
\end{remark}

\section{A rata die function on the computational calendar}
\label{sec:RataDie}

$e_0$, the minimum date in $G_0$, is undoubtedly the most natural epoch for a
rata die function $\rho_0$ on $G_0$, $\rho_0(x_0) = \#[e_0, x_0[$ for all $x_0
\in G_0$.

For any $x_0 = (y_0, m_0, d_0)\in G_0$, integers $y_0$, $m_0$, $d_0$ and
$\rho_0(x_0)$ are non-negative. Hence, implementations of $G_0$ can work
exclusively on unsigned integer types, which are usually faster than signed
ones.

Let $x_0 = (y_0, m_0, d_0)\in G_0$ and split $[e_0, x_0[$ into three disjoint
intervals:
\begin{equation*}
[e_0, x_0[\ = [(0, 3, 0), (y_0, 3, 0)[\ \cup\ [(y_0, 3, 0), (y_0, m_0, 0)[\ \cup
\ [(y_0, m_0, 0), (y_0, m_0, d_0)[.
\end{equation*}
Hence, $\rho_0(x_0)$ is the sum of the number of elements of these three
intervals, which we consider separately.

The {\bf year count}, defined by $y_c(y_0) = \#[(0, 3, 0), (y_0, 3, 0)[$ is the
number of dates prior to year $y_0$:
\begin{equation*}
y_c(y_0)
= 365\cdot y_0 + \#\{ y_0'\in [0, y_0[\ ;\ y_0' \text{ has 366 days }\} 
\end{equation*}
Recalling that year $y_0'$ of the computational calendar has 366 days if, and
only if, $y_0' + 1$ is a leap year of the Gregorian calendar gives:
\begin{equation}
\label{eq:y0}
\begin{aligned}
y_c(y_0)
&= 365\cdot y_0 + \#\{ y_0'\in [0, y_0[\ ;\ y_0' + 1\text{ is a leap year }\} \\
&= 365\cdot y_0 + \#\{ y\in [1, y_0 + 1[\ ;\ y\text{ is a leap year }\} \\
&= 365\cdot y_0 + \#\{ y\in [1, y_0]\ ;\ y\%4 = 0 \text{ and }
y\%100\ne 0 \text{ or } y\% 400 = 0\} \\
&= 365\cdot y_0 + y_0/4 - y_0/100 + y_0/400.
\end{aligned}
\end{equation}

\begin{remark}
\label{rmk:LeapCycles}
For all $y_0$, $z\in\Z$ with $y_0 + 400\cdot z\ge 0$, we have
\begin{align*}
y_c(y_0 + 400\cdot z) &=
365\cdot (y_0 + 400\cdot z) + (y_0 + 400\cdot z)/4 - (y_0 + 400\cdot z)/100 +
(y_0 + 400\cdot z)/400 \\
&= 365\cdot y_0 + y_0/4 - y_0/100 + y_0/400 + 146000\cdot z + 100\cdot z - 4
\cdot z + z
\\
&= y_c(y_0) + 146097\cdot z.
\end{align*}
Hence, by adding $400\cdot z$ years to a date, its rata die increases by
$146097\cdot z$. There are $146097$ dates in any $400$-year interval. Such
intervals are called {\bf leap cycles}.
\end{remark}

The {\bf month count} defined by $m_c(y_0, m_0) = \#[(y_0, 3, 0), (y_0, m_0,
0)[$ is the number of days between the 1st day of year $y_0$ and the 1st day of
month $m_0$ of year $y_0$. It can be obtained by adding the lengths of previous
months in year $y_0$. Since $m_0 = 14$ is the last month of the year, its length
is not included in the summation and becomes irrelevant to $m_c$.
Therefore, the year $y_0$ also becomes irrelevant to $m_c$,
because the last month is the only one whose length depends on the year.
In other words, $m_c(y_0, m_0) = m_c(m_0)$ as per \cref{tab:MonthCount}.
\begin{table}[ht]
\caption{Month count.}
\label{tab:MonthCount}
\hfill
\begin{tabular}{ccc}
\hline\noalign{\smallskip}
$m_0$ & $L(y_1, m_1)$ & $m_c(m_0)$ \\
\hline\noalign{\smallskip}
 3 & 31 &   0 \\
 4 & 30 &  31 \\
 5 & 31 &  61 \\
 6 & 30 &  92 \\
 7 & 31 & 122 \\
\hline
\end{tabular}
\hfill
\begin{tabular}{ccc}
\hline\noalign{\smallskip}
$m_0$ & $L(y_1, m_1)$ & $m_c(m_0)$ \\
\hline\noalign{\smallskip}
 8 & 31 & 153 \\
 9 & 30 & 184 \\
10 & 31 & 214 \\
11 & 30 & 245 \\
12 & 31 & 275 \\
\hline
\end{tabular}
\hfill
\begin{tabular}{ccc}
\hline\noalign{\smallskip}
$m_0$ & $L(y_1, m_1)$ & $m_c(m_0)$ \\
\hline\noalign{\smallskip}
13 & 31           & 306 \\
14 & (irrelevant) & 337 \\
\\
\\
\\
\hline
\end{tabular}
\hfill{}
\end{table}

Some implementations \cite{LinuxTimeconv, GlibcOfftime, DotNetDateTime} use
look-up arrays to recover $m_c(m_0)$, but it can be expressed by an EAF:
\begin{equation}
\label{eq:m0}
m_c(m_0) = (153\cdot m_0 - 457)/5, \quad\forall m_0\in [3, 14].
\end{equation}
Variations of this formula have been found by many authors \cite{Baum1998,
Hatcher1984, Hatcher1985, BoostGregorianCalendar, LLVMChrono, DotNetDateTime,
ReingoldDershowitz2018, Richards1998}, most often through a trial-and-error
line fitting to the set of points $\big\{(m_0, m_c(m_0))\in\Z^2\ ;\ m_0\in [3,
14]\big\}$. \Cref{rmk:MonthPeriod} clarifies the matter.
However, a simple textbook linear regression on the set $\big\{(m_0, m_c(m_0) +
0.5)\in\Q^2\ ;\ m_0\in[3, 14]\big\}$ finds $m_c(m_0) = (26256\cdot m_0
- 78317)/858$. Using this EAF in \cref{thm:FastEAFDown} produces the same
faster alternative seen in \cref{ex:FastEAFDownMonthCount}.

The {\bf day count} defined by $d_c(y_0, m_0, d_0) = \#[(y_0, m_0, 0), (y_0,
m_0, d_0)[$ is the number of dates since the 1st of month $m_0$ of year $y_0$.
Trivially, $d_c(y_0, m_0, d_0) = d_0$.

Putting together the above results yields the next proposition.

\begin{proposition}
\label{prp:RataDie}
Let $(y_0, m_0, d_0)\in G_0$. Set:
\begin{align*}
y_c &= 365\cdot y_0 + y_0/4 - y_0/100 + y_0/400, &
m_c &= (153\cdot m_0 - 457)/5, &
d_c &= d_0.
\end{align*}
Then, $\rho_0(y_0, m_0, d_0) = y_c + m_c + d_c$.
\end{proposition}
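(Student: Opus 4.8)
The plan is to establish Proposition 4.14 by assembling the three pieces that the preceding discussion has already prepared, and verifying that their sum counts exactly $\#[e_0, x_0[$. The fundamental decomposition is the one displayed in the text: for $x_0 = (y_0, m_0, d_0)\in G_0$, the half-open interval $[e_0, x_0[$ splits disjointly into
\begin{equation*}
[(0, 3, 0), (y_0, 3, 0)[\ \cup\ [(y_0, 3, 0), (y_0, m_0, 0)[\ \cup\ [(y_0, m_0, 0), (y_0, m_0, d_0)[.
\end{equation*}
Since the rata die function $\rho_0$ is defined as $\rho_0(x_0) = \#[e_0, x_0[$ and cardinality is additive over a disjoint union, it suffices to identify the cardinalities of these three intervals with $y_c$, $m_c$ and $d_c$ respectively. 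The first step, then, is simply to invoke this additivity: $\rho_0(x_0) = \#[(0,3,0),(y_0,3,0)[\ +\ \#[(y_0,3,0),(y_0,m_0,0)[\ +\ \#[(y_0,m_0,0),(y_0,m_0,d_0)[$.

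Next I would match each summand to its claimed value. The first interval is $y_c(y_0)$ by the definition of the year count, and the chain of equalities in \cref{eq:y0} has already reduced this to $365\cdot y_0 + y_0/4 - y_0/100 + y_0/400$, which is precisely the stated $y_c$; the key input there is that year $y_0'$ of the computational calendar has $366$ days iff $y_0'+1$ is a Gregorian leap year, together with the inclusion–exclusion count of leap years in $[1, y_0]$. The second interval is $m_c(y_0, m_0)$ by definition, which the text argues equals $m_c(m_0)$ (the year is irrelevant because only the last month $m_0 = 14$ has year-dependent length, and its length never enters the count up to month $m_0$); \cref{eq:m0} then gives $m_c(m_0) = (153\cdot m_0 - 457)/5$ for $m_0\in[3,14]$, matching the stated $m_c$. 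The third interval is $d_c(y_0, m_0, d_0)$, which is trivially $d_0$ since it counts $\#[(y_0,m_0,0),(y_0,m_0,d_0)[ = d_0$.

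The only genuine content beyond bookkeeping is making sure the three intervals are legitimately disjoint and exhaust $[e_0, x_0[$ under the lexicographic order, and that the endpoints $(y_0,3,0)$ and $(y_0,m_0,0)$ actually lie in the correct positions relative to $e_0 = (0,3,0)$ and $x_0$. Because $x_0 = (y_0,m_0,d_0)\in G_0$ forces $m_0\in[3,15[$ and $d_0\in[0, L(y_1,m_1)[$, the intermediate points are well-ordered: $(0,3,0)\le (y_0,3,0)\le (y_0,m_0,0)\le (y_0,m_0,d_0)$, so the telescoping decomposition is valid. I expect the main obstacle, if any, to be a subtle one rather than a computational one: confirming that the formula $m_c(m_0) = (153\cdot m_0 - 457)/5$ is exact as an \emph{integer} Euclidean-division identity over the whole range $m_0\in[3,14]$, i.e. that the EAF reproduces the tabulated values in \cref{tab:MonthCount} with no off-by-one error at the endpoints. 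This is precisely the kind of claim \cref{thm:EAF} and \cref{rmk:MonthPeriod} are designed to underwrite, so once \cref{eq:m0} is granted the proof reduces to citing the three definitions and adding them. The argument is therefore short: state additivity, substitute the three established formulas, and conclude $\rho_0(y_0,m_0,d_0) = y_c + m_c + d_c$.
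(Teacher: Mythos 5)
Your proof is correct and follows essentially the same route as the paper: the paper's justification for this proposition is precisely the preceding discussion, namely the three-way disjoint splitting of $[e_0, x_0[$ together with the year-count formula of \cref{eq:y0}, the month-count EAF of \cref{eq:m0}, and the trivial day count, summed by additivity of cardinality. Your added checks (the lexicographic ordering of the intermediate endpoints and the exactness of \cref{eq:m0} on $m_0\in[3,14]$) are sound and only make explicit what the paper leaves implicit.
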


The calculations of \cref{prp:RataDie} can be performed more efficiently by first computing the century.

\begin{corollary}
\label{cor:RataDie}
Let $(y_0, m_0, d_0)\in G_0$. Set:
\begin{align*}
q_1 &= y_0/100, &
y_c &= 1461\cdot y_0/4 - q_1 + q_1/4, &
m_c &= (979\cdot m_0 - 2919)/2^5, &
d_c &= d_0.
\end{align*}
Then, $\rho_0(y_0, m_0, d_0) = y_c + m_c + d_c$.
\end{corollary}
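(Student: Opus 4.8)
The plan is to show that the two expressions for $\rho_0$ in Corollary~\ref{cor:RataDie} and Proposition~\ref{prp:RataDie} agree, since the latter is already established. Because $d_c = d_0$ is identical in both and $m_c$ only differs by one of the fast-EAF reductions, the real work concerns the year count. First I would invoke Example~\ref{ex:FastEAFDownMonthCount}, which states precisely that $(153\cdot m_0 - 457)/5 = (979\cdot m_0 - 2919)/2^5$ for all $m_0\in[0,34[$; since any date in $G_0$ has $m_0\in[3,15[\subset[0,34[$, the two month-count expressions coincide on the relevant range, so this term requires no further argument.

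The heart of the proof is therefore the identity
\begin{equation*}
365\cdot y_0 + y_0/4 - y_0/100 + y_0/400 = 1461\cdot y_0/4 - q_1 + q_1/4,
\quad q_1 = y_0/100.
\end{equation*}
I would attack the two sides piece by piece. For the left side, I would combine $365\cdot y_0 + y_0/4$ into a single Euclidean-division term. Writing $4\cdot(365\cdot y_0) = 1460\cdot y_0$, I would use Lemma~\ref{lem:FastEAF} (or the defining identity $y_0 = 4\cdot(y_0/4) + y_0\%4$) to show $1461\cdot y_0/4 = 365\cdot y_0 + y_0/4$, since $1461 = 4\cdot 365 + 1$ gives $1461\cdot y_0/4 = 365\cdot y_0 + y_0/4$ directly from $(1460\cdot y_0 + y_0)/4 = 365\cdot y_0 + y_0/4$. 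This collapses the first two terms on each side, reducing the claim to
\begin{equation*}
-y_0/100 + y_0/400 = -q_1 + q_1/4.
\end{equation*}

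The remaining obstacle, and the only genuinely non-routine step, is establishing the nested-division identities $y_0/100 = q_1$ (immediate from the definition of $q_1$) and, more delicately, $y_0/400 = q_1/4 = (y_0/100)/4$. The latter is an instance of the general rule that iterated Euclidean division by a product equals division by the product, i.e.\ $(y_0/100)/4 = y_0/(100\cdot 4) = y_0/400$. I would prove this by writing $y_0 = 400\cdot(y_0/400) + y_0\%400$ with $0\le y_0\%400 < 400$, then dividing by $100$ to get $y_0/100 = 4\cdot(y_0/400) + (y_0\%400)/100$ where $0\le(y_0\%400)/100 \le 3 < 4$; a further division by $4$ then isolates $y_0/400$ as the quotient, giving $q_1/4 = y_0/400$. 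Assembling these substitutions yields equality of the two year-count expressions, and adding the matching $m_c$ and $d_c$ terms completes the proof that $\rho_0(y_0,m_0,d_0) = y_c + m_c + d_c$ as stated. I expect the associativity-of-division step to be the one requiring the most care, since it is exactly the kind of integer-division manipulation where the naive algebraic intuition from $\mathbb{Q}$ can mislead.
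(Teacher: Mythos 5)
Your proof is correct and follows essentially the same route as the paper's: both reduce the claim to \cref{prp:RataDie} via the three identities $1461\cdot y_0/4 = 365\cdot y_0 + y_0/4$, the month-count equality from \cref{ex:FastEAFDownMonthCount}, and $y_0/400 = q_1/4$. The only difference is that you spell out the nested-division step $(y_0/100)/4 = y_0/400$ in full, which the paper simply asserts; your verification of it is valid.
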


\begin{proof}
It suffices to show that the following alternatives for the expressions seen in
\cref{prp:RataDie} hold.
\begin{enumerate}
\item\label{it:365} $1461\cdot y_0/4 = 365\cdot y_0 + y_0/4$:
This follows from $1461\cdot y_0 = 4\cdot (365 \cdot y_0 + y_0/4) + y_0\%4$ and
$0\le y_0\%4 < 4$.

\item $(153\cdot m_0 -457)/5 = (979\cdot m_0 - 2919)/2^5$:
This follows from $m_0\in[3, 14]$ and \cref{eq:FastEAFDownMonthCount}.

\item $y_0/400 = q_1/4$:
This follows from $y_0/100/4 = y_0/400$.
\end{enumerate}
\end{proof}

\subsection{Inverting the rata die function on the computational calendar}
\label{sec:RataDieInv}

The computational calendar $G_0$ is unbounded from above and, thus, for any $r_0
\in\Zp$ a unique $x_0\in G_0$ exists such that $\rho_0(x_0) = r_0$. The objective of
this section is to find $x_0$ given that $r_0$.

For $x_0 = (y_0, m_0, d_0)\in G_0$, the quotient $q = y_0/100$ is its {\bf
century}. Since $100\cdot q\leq y_0 < 100\cdot (q + 1)$, we have $(100\cdot q,
3, 0)\leq x_0 < (100 \cdot (q + 1), 3, 0)$. The quantity $\#[(100\cdot q, 3, 0),
x_0[$ is the {\bf day of the century} of $x_0$. The next proposition retrieves the
century and the day of the century from a rata die.

\begin{proposition}
\label{prp:Century}
Let $x_0\in G_0$ and $r_0$ be its rata die. Set:
\begin{align*}
n_1 &= 4\cdot r_0 + 3, & q_1 &= n_1/146097, & r_1 &= n_1\%146097/4.
\end{align*}
Then, $q_1$ is the century of $x_0$ and $r_1$ is its day of the century.
\end{proposition}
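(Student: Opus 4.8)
The plan is to use the rata die formula from Proposition~\ref{prp:RataDie}, express $\rho_0$ in terms of the century $q_1 = y_0/100$, and then recognise the resulting expression as an instance of the EAF machinery developed in \cref{thm:EAF} and \cref{ex:ResidualCentury}.

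First I would let $x_0 = (y_0, m_0, d_0)\in G_0$ and write $y_0 = 100\cdot q_1 + y_1$ where $q_1 = y_0/100$ is the century and $y_1 = y_0\%100\in[0,100[$ is the year of the century. By \cref{rmk:LeapCycles}, reducing $y_0$ by multiples of $400$ shifts $y_c$ by multiples of $146097$; more useful here is to isolate the contribution of $q_1$. The key computation is to show that the day of the century, $\#[(100\cdot q_1, 3, 0), x_0[$, is exactly $r_0 - y_c(100\cdot q_1)$, and that $y_c(100\cdot q_1) = 146097\cdot q_1 - q_1$ (using $100\cdot q_1/4 = 25 q_1$, $100\cdot q_1/100 = q_1$, $100\cdot q_1/400 = q_1/4$, so $y_c(100\cdot q_1) = 36500 q_1 + 25 q_1 - q_1 + q_1/4$; I would verify the exact constant). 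The goal is to rewrite $r_0$ so that it takes the form covered by \cref{ex:ResidualCentury}, where $f(r_0) = (4\cdot r_0 + 3)/146097$ recovers the century and $(4\cdot r_0+3)\%146097/4$ recovers the day of the century.

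The central step is to establish the interval membership $r_0\in[\hat{f}(q_1), \hat{f}(q_1+1)[$ for $\hat{f}(q) = 146097\cdot q/4$, so that \cref{thm:EAF}-\labelcref{it:Interval} and \labelcref{it:Residual} apply directly. Concretely, I would show $\hat f(q_1) = 146097\cdot q_1/4$ equals the rata die of the first date of century $q_1$, namely $\rho_0(100\cdot q_1, 3, 0) = y_c(100\cdot q_1)$, and that $\hat f(q_1 + 1)$ is the rata die of the first date of the next century. Since $x_0$ lies in century $q_1$ by definition, we have $\rho_0(100\cdot q_1,3,0)\le r_0 < \rho_0(100\cdot(q_1+1),3,0)$, which is precisely the required interval condition. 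Then \cref{ex:ResidualCentury} yields immediately that $n_1/146097 = (4 r_0 + 3)/146097 = q_1$ and $n_1\%146097/4 = r_0 - \hat f(q_1) = r_0 - \rho_0(100\cdot q_1, 3, 0) = \#[(100\cdot q_1, 3, 0), x_0[ = r_1$, the day of the century.

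The main obstacle I anticipate is the bookkeeping in verifying that $\hat f(q_1) = 146097\cdot q_1/4$ coincides exactly with $y_c(100\cdot q_1)$, and that the inequality $y_c(100\cdot q_1)\le r_0 < y_c(100\cdot(q_1+1))$ holds as a consequence of $100\cdot q_1\le y_0 < 100\cdot(q_1+1)$ together with the monotonicity of $\rho_0$. This requires care because $y_c$ is only non-decreasing in $y_0$ and because the Euclidean division $146097\cdot q_1/4$ must be shown to equal the integer $y_c(100\cdot q_1)$ on the nose rather than merely approximately. Once these constants are pinned down, the two conclusions follow mechanically from \cref{thm:EAF} applied as in \cref{ex:ResidualCentury}, so the essential work is entirely in the arithmetic identity linking $\hat f$ to the year count at century boundaries.
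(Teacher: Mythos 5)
Your proposal is correct and follows essentially the same route as the paper's proof: identify the rata die of the century start $\rho_0(100\cdot q_1, 3, 0) = y_c(100\cdot q_1)$ with $\hat{f}(q_1) = 146097\cdot q_1/4$, establish $r_0\in[\hat{f}(q_1), \hat{f}(q_1+1)[$ from the decomposition of $[e_0, x_0[$ at the century boundary, and then invoke \cref{thm:EAF} via \cref{ex:ResidualCentury}. Note only that your intermediate expression ``$y_c(100\cdot q_1) = 146097\cdot q_1 - q_1$'' is a slip, but your expanded computation $36500\cdot q_1 + 25\cdot q_1 - q_1 + q_1/4 = 36524\cdot q_1 + q_1/4 = 146097\cdot q_1/4$ is exactly the identity the paper uses, so the argument goes through as you outline it.
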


\begin{proof}
Write $x_0 = (y_0, m_0, d_0)$ and set $q = y_0/100$ and $r = \#[(100\cdot q, 3,
0), x_0[$. We must prove that $q_1 = q$ and $r_1 = r$.

For $p\in\Zp$, let $g(p) = \#[e_0, (100\cdot p, 3, 0)[$. The definition of $y_c$
gives $g(p) = y_c(100\cdot p)$ and \cref{eq:y0} yields
\begin{align*}
g(p) &=
365\cdot 100\cdot p + 100\cdot p/4 - 100\cdot p/100 + 100\cdot p/400
 = 36524\cdot p + p/4 \\
&= 146097\cdot p/4.
\end{align*}
(The last equality is obtained as \cref{it:365} in the proof of \cref{cor:RataDie},
using $36524$ instead of $365$.)

Since $r_0 = \#[e_0, x_0[\ = \#[e_0, (100\cdot q, 3, 0)[\ +\ \#[(100\cdot q, 3,
0), x_0[$, we have $r_0 = g(q) + r\ge g(q)$. Now, $x_0 < (100\cdot(q
+ 1), 3, 0)$ and $r_0 = \#[e_0, x_0[\ < \#[e_0, (100\cdot (q + 1), 3, 0)[\ =
g(q + 1)$. Hence, $r_0 < g(q + 1)$, in other words, $r_0\in[g(q),
g(q + 1)[$. As seen in \cref{ex:ResidualCentury} (where $\hat{f} = g$):
\begin{equation*}
(4\cdot r_0 + 3)/146097 = q \quad\text{and}\quad (4\cdot r_0 + 3)\%146097/4 =
r_0 - g(q).
\end{equation*}
These equalities mean that $q_1 = q$ and $r_1 = r$.
\end{proof}

For $x_0 = (y_0, m_0, d_0)\in G_0$, the remainder $y_0\%100$ is its {\bf year of
the century}. We have $(y_0, 3, 0)\leq x_0 < (y_0 + 1, 3, 0)$ and the quantity
$\#[(y_0, 3, 0), x_0[$ is called the {\bf day of the year} of $x_0$. The next
proposition starts where \cref{prp:Century} left off, {\em i.e.}, at the day of
the century, to obtain the year of the century and the day of the year.

\begin{proposition}
\label{prp:Year}
Let $x_0\in G_0$ and $r_1$ be its day of the century. Set:
\begin{align*}
n_2 &= 4\cdot r_1 + 3, & q_2 &= n_2/1461, & r_2 &= n_2\%1461/4.
\end{align*}
Then, $q_2$ is the year of the century of $x_0$ and $r_2$ is its day of the year.
\end{proposition}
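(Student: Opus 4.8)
The plan is to mirror the proof of \cref{prp:Century}, replacing the century-counting function $g$ by an analogous year-counting function internal to a single century. Write $x_0 = (y_0, m_0, d_0)$ and, invoking \cref{prp:Century}, let $q_1 = y_0/100$ be the century of $x_0$. Set $q = y_0\%100$ (the year of the century) and $r = \#[(y_0, 3, 0), x_0[$ (the day of the year). The goal is then to show $q_2 = q$ and $r_2 = r$.

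First I would introduce $h(p) = \#[(100\cdot q_1, 3, 0), (100\cdot q_1 + p, 3, 0)[$ for $p\in\Zp$, which by definition of the year count equals $y_c(100\cdot q_1 + p) - y_c(100\cdot q_1)$. The crux is to prove $h(p) = 1461\cdot p/4$ for all $p\in[0, 100[$, so that $h$ agrees with the minimal right inverse $\hat{f}(q) = 1461\cdot q/4$ of \cref{ex:ResidualYear}. Using the formula for $y_c$ in \cref{eq:y0} and expanding the difference, the summand $365\cdot p$ appears at once, and the $/4$ term collapses to $p/4$ because $100\cdot q_1$ is divisible by $4$. The main obstacle is showing that the two century-correction terms $-y_0/100$ and $y_0/400$ contribute nothing. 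This holds precisely because $p < 100$ keeps the relevant remainders below their thresholds: since $p < 100$ we have $(100\cdot q_1 + p)/100 = q_1 = 100\cdot q_1/100$, and since $100\cdot(q_1\%4) + p < 400$ we also get $(100\cdot q_1 + p)/400 = 100\cdot q_1/400$. Combining these with $1461\cdot p/4 = 365\cdot p + p/4$ (which follows from $1461\cdot p = 4\cdot 365\cdot p + p$) yields $h(p) = 1461\cdot p/4$ on $[0, 100[$.

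Next I would locate $r_1$ in the correct interval. Since $m_0\ge 3$ and $y_0 = 100\cdot q_1 + q$, we have $(y_0, 3, 0) = (100\cdot q_1 + q, 3, 0)\le x_0 < (y_0 + 1, 3, 0) = (100\cdot q_1 + q + 1, 3, 0)$; splitting $[(100\cdot q_1, 3, 0), x_0[$ at $(y_0, 3, 0)$ therefore gives $r_1\in[h(q), h(q+1)[$. For $q\in[0, 99[$ both endpoints equal their Julian values, so $r_1\in[\hat{f}(q), \hat{f}(q+1)[$ immediately. The edge case $q = 99$ needs care, since $h(100)$ (the length of the century, which is $36524$ or $36525$) need not equal $\hat{f}(100) = 36525$; however $h(100)\le 36525 = \hat{f}(100)$, so $r_1 < h(100)\le\hat{f}(100)$ still places $r_1$ in $[\hat{f}(99), \hat{f}(100)[$.

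Finally, \cref{ex:ResidualYear} applied with $r_1\in[\hat{f}(q), \hat{f}(q+1)[$ yields $(4\cdot r_1 + 3)/1461 = q$ and $(4\cdot r_1 + 3)\%1461/4 = r_1 - \hat{f}(q) = r_1 - h(q)$. The first equality is exactly $q_2 = q$, the year of the century. For the second, note $h(q) = \#[(100\cdot q_1, 3, 0), (y_0, 3, 0)[$ while $r_1 = \#[(100\cdot q_1, 3, 0), x_0[$, so the interval splitting above gives $r_2 = r_1 - h(q) = \#[(y_0, 3, 0), x_0[\ = r$, the day of the year, completing the proof.
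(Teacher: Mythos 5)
Your proof is correct and follows essentially the same route as the paper's: the same auxiliary function (the paper's $g$, your $h$) measuring days from the start of the century, the same interval splitting $r_1 = h(q) + r$, and the same appeal to \cref{ex:ResidualYear}. The only notable difference is that you explicitly verify the edge case $q = 99$, where $h(100)$ need not equal $\hat{f}(100) = 36525$; the paper glosses over this point by simply identifying ``$\hat{f} = g$'' (its formula $g(p) = 1461\cdot p/4$ is derived only for $p \in [0, 100[$), so your treatment is in fact slightly more careful on this point.
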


\begin{proof}
Write $x_0 = (y_0, m_0, d_0)$ and set $q_1 = y_0/100$, $q = y_0\%100$ and $r =
\#[(y_0, 3, 0), x_0[$. We must prove that $q_2 = q$ and that $r_2 = r$.

For $p\in[0, 100[$, it is easy to show that $(100\cdot q_1 + p)/100 = 100\cdot
q_1/100$ and $(100\cdot q_1 + p)/400 = q_1/4$. Let $g(p) = \#[(100\cdot q_1, 3,
0), (100\cdot q_1 + p, 3, 0)[$. The definition of $y_c$ gives $g(p) = y_c(100
\cdot q_1 + p) - y_c(100\cdot q_1)$ and \cref{eq:y0} yields:
\begin{align*}
g(p) &=
365\cdot(100\cdot q_1 + p) + (100\cdot q_1 + p)/4 - (100\cdot q_1 + p)/100 +
  (100\cdot q_1 + p)/400 - y_c(100\cdot q_1) \\
&= 365\cdot p + p/4
 = 1461\cdot p/4.
\end{align*}
(The last equality is the same as \cref{it:365} in the proof of \cref{cor:RataDie}.)

Since $y_0 = 100\cdot q_1 + q$ and $r_1 = \#[(100\cdot q_1, 3, 0), x_0[$, we
have
\begin{align*}
r_1
&= \#[(100\cdot q_1, 3, 0), (y_0, 3, 0)[\ +\ \#[(y_0, 3, 0), x_0[ \\
&= \#[(100\cdot q_1, 3, 0), (100\cdot q_1 + q, 3, 0)[\ +\ \#[(y_0, 3, 0), x_0[\
 = g(q) + r\ge g(q).
\end{align*}
Since $x_0 < (y_0 + 1, 3, 0) = (100\cdot q_1 + q + 1, 3, 0)$, we have $r_1 <
\#[(100\cdot q_1, 3, 0), (100\cdot q_1 + q + 1, 3, 0)[\ = g(q + 1)$ and it
follows that $r_1\in[g(q), g(q + 1)[$. As seen in \cref{ex:ResidualYear} (where
$\hat{f} = g$):
\begin{equation*}
(4\cdot r_1 + 3)/1461 = q \quad\text{and}\quad (4\cdot r_1 + 3)\%1461/4 = r_1
- g(q).
\end{equation*}
These equalities mean that $q_2 = q$ and that $r_2 = r$.
\end{proof}

Like \cref{prp:Year}, the following result provides calculations for year of century and day of year, but in a computationally more efficient way.

\begin{corollary}
\label{cor:Year}
Let $x_0\in G_0$ and $r_1$ be its day of the century. Set:
\begin{align*}
n_2 &= 4\cdot r_1 + 3, & u_2 &= 2\s939\s745\cdot n_2, & q_2 &= u_2/2^{32}, &
r_2 &= u_2\%2^{32}/2\s939\s745/4.
\end{align*}
Then, $q_2$ is the year of the century of $x_0$ and $r_2$ is its day of the year.
\end{corollary}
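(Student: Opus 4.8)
The plan is to recognise this corollary as a strength-reduced restatement of \cref{prp:Year}: both results start from the same $n_2 = 4\cdot r_1 + 3$, and \cref{prp:Year} already certifies that $n_2/1461$ is the year of the century and $n_2\%1461/4$ the day of the year. It therefore suffices to show that the faster expressions computed here, $q_2 = u_2/2^{32}$ and $r_2 = u_2\%2^{32}/2\s939\s745/4$, coincide with $n_2/1461$ and $n_2\%1461/4$ respectively. The bridge is \cref{ex:FastResidualYear}, whose two identities hold for $n_2\in[0, 28\s825\s529[$, so the one genuine hypothesis to discharge is that $n_2$ lands in this interval.

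First I would bound $r_1$. Being the day of the century, $r_1 = \#[(100\cdot q_1, 3, 0), x_0[$ with $x_0 < (100\cdot(q_1 + 1), 3, 0)$, so $r_1$ is strictly less than the length of a century. Using $g(p) = 146097\cdot p/4 = 36524\cdot p + p/4$ (as in the proof of \cref{prp:Century}), a century spans $g(q_1 + 1) - g(q_1)\le 36525$ days, whence $0\le r_1\le 36524$; equivalently, \cref{prp:Century} gives $r_1 = n_1\%146097/4\le 146096/4 = 36524$. Consequently $n_2 = 4\cdot r_1 + 3\le 146099 < 28\s825\s529$, and trivially $n_2\ge 3 > 0$, so indeed $n_2\in[0, 28\s825\s529[$.

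With the range secured, I would apply \cref{ex:FastResidualYear} to obtain $n_2/1461 = 2\s939\s745\cdot n_2/2^{32} = u_2/2^{32} = q_2$ and $n_2\%1461 = 2\s939\s745\cdot n_2\%2^{32}/2\s939\s745 = u_2\%2^{32}/2\s939\s745$. Dividing the second identity by $4$ and respecting the left-to-right precedence of $/$ fixed in the introduction gives $n_2\%1461/4 = u_2\%2^{32}/2\s939\s745/4 = r_2$. Hence $q_2$ and $r_2$ match the quantities of \cref{prp:Year}, which identifies them as the year of the century and the day of the year, completing the argument.

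The only non-mechanical step is the interval check of the second paragraph, where the main obstacle is pinning down the maximal day of the century; after that, everything is a direct substitution via \cref{ex:FastResidualYear}, the sole remaining subtlety being the operator precedence in the nested quotient defining $r_2$.
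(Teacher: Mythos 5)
Your proposal is correct and follows essentially the same route as the paper's own proof: both reduce the corollary to \cref{prp:Year} via the identities of \cref{ex:FastResidualYear}, and both discharge the range hypothesis by using \cref{prp:Century} to write $r_1 = n_1\%146097/4\le 36524$, giving $n_2\in[3, 146100[\ \subset[0, 28\s825\s529[$. The extra century-length bound you give is redundant but harmless.
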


\begin{proof}
\Cref{ex:FastResidualYear} shows that the $q_2$ and $r_2$ set above match those of
\cref{prp:Year}, provided that $n_2\in[0, 28\s825\s529[$, which we shall
prove now. \Cref{prp:Century} provides that $r_1 = n_1\% 146097/4$ for some $n_1\in\Z$.
From $0 \le n_1\%146097 < 146097$, it easily follows that $3\le 4\cdot r_1 + 3 <
146100$, in other words, $n_2\in [3, 146100[$.
\end{proof}

The following proposition extracts the month and day from the day of the year.

\begin{proposition}
\label{prp:Month}
Let $x_0\in G_0$ and $r_2$ be its day of the year. Set:
\begin{align*}
n_3 &= 5\cdot r_2 + 461, & q_3 &= n_3/153, & r_3 &= n_3\%153/5.
\end{align*}
Then, $q_3$ is the month of $x_0$ and $r_3$ is its day.
\end{proposition}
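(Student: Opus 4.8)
The plan is to follow the template of \cref{prp:Century,prp:Year}. Writing $x_0 = (y_0, m_0, d_0)$, the goal is to show $q_3 = m_0$ and $r_3 = d_0$. Since $r_2 = \#[(y_0, 3, 0), x_0[$ is the day of the year, I would first split this interval at $(y_0, m_0, 0)$ to obtain
\begin{equation*}
r_2 = \#[(y_0, 3, 0), (y_0, m_0, 0)[\ +\ \#[(y_0, m_0, 0), (y_0, m_0, d_0)[\ = m_c(m_0) + d_0,
\end{equation*}
where $m_c$ is the month count and $d_0$ the day count of \cref{sec:RataDie}.

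Next I would recognise that the minimal right inverse $\hat{f}(m_0) = (153\cdot m_0 - 457)/5$ of \cref{ex:ResidualMonth} is exactly $m_c(m_0)$ by \cref{eq:m0}, so the displayed identity reads $r_2 = \hat{f}(m_0) + d_0$. The crux is then to establish the membership $r_2\in[\hat{f}(m_0), \hat{f}(m_0 + 1)[$. The lower bound is immediate from $d_0\ge 0$. For the upper bound, when $m_0\in[3, 13]$ the consecutive month counts differ by the length of month $m_0$: since both $m_0$ and $m_0 + 1$ lie in $[3, 14]$, we have $\hat{f}(m_0 + 1) - \hat{f}(m_0) = m_c(m_0 + 1) - m_c(m_0) = \#[(y_0, m_0, 0), (y_0, m_0 + 1, 0)[\ = L(y_1, m_1)$, with $(y_1, m_1, d_1) = P_1^{-1}(x_0)$; then $d_0 < L(y_1, m_1)$ gives $r_2 < \hat{f}(m_0 + 1)$.

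The main obstacle is the last month $m_0 = 14$ (February), where the combinatorial and EAF month counts diverge: \cref{eq:m0} fails at $m_0 = 15$, and the EAF instead assigns the \emph{virtual} gap $\hat{f}(15) - \hat{f}(14) = 367 - 337 = 30$, larger than the true February length of $28$ or $29$ days. I would resolve this by arguing directly from $d_0 < L(y_1, m_1)\le 29 < 30$, which yields $r_2 = \hat{f}(14) + d_0 < 337 + 30 = 367 = \hat{f}(15)$, so the membership still holds. With $r_2\in[\hat{f}(m_0), \hat{f}(m_0 + 1)[$ secured in every case, \cref{ex:ResidualMonth} (an instance of \cref{thm:EAF}) delivers $q_3 = (5\cdot r_2 + 461)/153 = m_0$ and $r_3 = (5\cdot r_2 + 461)\%153/5 = r_2 - \hat{f}(m_0) = d_0$, which is the claim.
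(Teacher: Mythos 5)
Your proposal is correct and follows essentially the same route as the paper's own proof: the same decomposition $r_2 = m_c(m_0) + d_0$, the same identification of the minimal right inverse $\hat{f}$ with the month count via \cref{eq:m0}, the same case split isolating $m_0 = 14$ (where the paper bounds $r_2 \le 366 < 367 = \hat{f}(15)$, matching your ``virtual gap'' observation), and the same conclusion via \cref{ex:ResidualMonth}. No gaps; the differences are purely presentational.
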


\begin{proof}
Let $x_0 = (y_0, m_0, d_0)$. We must prove that $q_3 = m_0$ and that $r_3 = d_0$.

For $p\in[3, 15]$, let $g(p) = (153\cdot p - 457)/5$. \Cref{eq:m0} provides that if
$p\in[3, 14]$, then $g(p) = m_c(p) = \# [(y_0, 3, 0), (y_0, p, 0)[$. By definition,
$r_2 = \#[(y_0, 3, 0), x_0[$ can be written as:
\begin{align*}
r_2
&= \#[(y_0, 3, 0), (y_0, m_0, 0)[\ +\ \#[(y_0, m_0, 0), (y_0, m_0, d_0)[\
 = g(m_0) + d_0 \ge g(m_0).
\end{align*}
We shall now prove that $r_2 < g(m_0 + 1)$. If $m_0 < 14$, then $x_0 < (y, m_0 +
1, 0)$ and we obtain $r_2 = \#[(y_0, 3, 0), x_0[\ < \#[(y_0, 3, 0), (y_0, m_0 +
1, 0)[\ = m_c(m_0 + 1) = g(m_0 + 1)$. Now, if $m_0 = 14$, then $x_0\le (y_0, 14,
L(y_0, 2))$ and it follows that $r_2\leq\#[(y_0, 3, 0), (y_0, 14, L(y_0,
2))]\leq 366 < 367 = g(15) = g(m_0 + 1)$. Hence, $r_2 < g(m_0 + 1)$, in other words,
$r_2\in[g(m_0), g(m_0 + 1)[$. As seen in \cref{ex:ResidualMonth} (where $\hat{f}
= g$):
\begin{equation*}
(5\cdot r_2 + 461)/153 = m_0 \quad \text{and}\quad (5\cdot r_2 + 461)\%153/5 =
r_2 - g(m_0).
\end{equation*}
These equalities mean that $q_3 = m_0$ and that $r_3 = d_0$.
\end{proof}

\begin{remark}
\label{rmk:JanOrFeb}
In the proof of \cref{prp:Month}, we have shown that if $r_2$ is the day of the year
of $(y_0, m_0, d_0)\in G_0$ and $m_0 < 14$, then $r_2\in[m_c(m_0), m_c(m_0 +
1)[$. It follows that $m_0\ge 13$ is equivalent to $r_2\ge m_c(13) = 306$.
\end{remark}

Again, the following corollary improves the computational efficiency of the calculations of \cref{prp:Month}.

\begin{corollary}
\label{cor:Month}
Let $x_0\in G_0$ and $r_2$ be its day of the year. Set:
\begin{align*}
n_3 &= 2141\cdot r_2 + 197913, & q_3 &= n_3/2^{16}, & r_3 &= n_3\%2^{16}/2141.
\end{align*}
Then, $q_3$ is the month of $x_0$ and $r_3$ is its day.
\end{corollary}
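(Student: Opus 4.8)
The plan is to follow the template already established by the proof of \cref{cor:Year}: use \cref{ex:FastResidualMonth} to trade the EAF $(5\cdot r_2 + 461)/153$ and its residual for the power-of-two versions $(2141\cdot r_2 + 197913)/2^{16}$ and $(2141\cdot r_2 + 197913)\%2^{16}/2141$, and then quote \cref{prp:Month} to identify these with the month and day of $x_0$. Concretely, \cref{ex:FastResidualMonth} supplies the two equalities $(5\cdot r_2 + 461)/153 = (2141\cdot r_2 + 197913)/2^{16}$ and $(5\cdot r_2 + 461)\%153/5 = (2141\cdot r_2 + 197913)\%2^{16}/2141$, valid for every $r_2\in[0, 734[$. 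So the $q_3$ and $r_3$ defined in this corollary coincide term-for-term with those of \cref{prp:Month}, \emph{provided} the day of the year $r_2$ actually lies in $[0, 734[$.

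The only real work is therefore verifying this range condition. The lower bound $r_2\ge 0$ is immediate from the definition of day of the year. For the upper bound I would lift the estimate already contained in the proof of \cref{prp:Month}: there it is shown that $r_2 < g(m_0 + 1)\le g(15) = 367$, where $g(p) = (153\cdot p - 457)/5$, so in fact $r_2\le 366$. Hence $r_2\in[0, 367[\subset[0, 734[$ and \cref{ex:FastResidualMonth} applies. (Equivalently, one may simply note that a computational year has at most $366$ days, so its day-of-year index cannot reach $734$.)

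With the range secured, the expressions for $q_3$ and $r_3$ in this corollary are literally equal to those in \cref{prp:Month}, which already identifies $q_3$ as the month and $r_3$ as the day of $x_0$; this finishes the argument. I expect no genuine obstacle here: the interval of validity $[0, 734[$ from \cref{eq:FastEAFDownMonth} was deliberately made more than twice as wide as the largest possible day-of-year value, so the range check is comfortable and requires no delicate estimation, in contrast to \cref{cor:Year}, where the much larger bound $28\,825\,529$ had to be compared against an input at most $146\,100$.
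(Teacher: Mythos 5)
Your proposal is correct and follows essentially the same route as the paper's proof: invoke \cref{ex:FastResidualMonth} to identify $q_3$ and $r_3$ with those of \cref{prp:Month}, then verify the range condition $r_2\in[0, 734[$. The only immaterial difference is the range check itself: the paper bounds $r_2\le 365$ via the formula $r_2 = n_2\%1461/4$ supplied by \cref{prp:Year}, whereas you bound $r_2\le 366$ by reusing the estimate $r_2 < g(m_0+1)\le g(15) = 367$ from inside the proof of \cref{prp:Month} (equivalently, by the year-length observation); both bounds sit comfortably inside $[0, 734[$.
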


\begin{proof}
\Cref{ex:FastResidualMonth} shows that the $q_3$ and $r_3$ set above match those of
\cref{prp:Month}, provided that $r_2\in[0, 734[$, which we shall prove now.
\Cref{prp:Year} provides that $r_2 = n_2\%1461/4$ for some $n_2\in\Z$. From $0\le
n_2\%1461 < 1461$, it easily follows that $0\le r_2\le 365$.
\end{proof}

\section{Rata die functions on the Gregorian calendar}
\label{sec:RataDieGregorian}

\Cref{sec:RataDie} covered evaluation of a rata die function and its inverse on
the computational calendar $G_0$. This section adapts them to the Gregorian
calendar.

$G_0$ allows implementations to work exclusively on unsigned integer types, which
might be faster than signed ones. To avoid compromising this advantage,
we only consider subsets of $G$ that have a minimum. These calendars, their
respective rata die functions and epochs are summarised in \cref{tab:Calendars}.
For ease of reference, the first row is a reminder for $G_0$.
\begin{table}[ht]
\caption{Gregorian-related calendars.}
\label{tab:Calendars}
\hfill
\begin{tabular}{lll}
\hline\noalign{\smallskip}
Calendar & Rata die & Epoch \\
\hline\noalign{\smallskip}
$G_0 = P_1(G_1)$, where $G_1$ is defined below.
  & $\rho_0:G_0\rightarrow\Z$
  & $e_0 = (0, 3, 0) = P_1(e_1)$
  \\
$G_1 = \{ x_1\in G\ ;\ x_1\ge e_1 \}$
  & $\rho_1:G_1\rightarrow\Z$
  & $e_1 = (0, 3, 1)$
  \\
$G_2 = \{ x_2\in G\ ;\ x_2\ge e_2 \}$
  & $\rho_2:G_2\rightarrow\Z$
  & $e_2 = (z_2, 3, 1)$, where $z_2$ is a multiple of $400$.
  \\
$G_2$ (as above)
  & $\rho_3:G_2\rightarrow\Z$
  & $e_3\in G_2$.
  \\
\noalign{\smallskip}\hline
\end{tabular}
\hfill{}
\end{table}

The rows of \cref{tab:Calendars} show increasing degrees of freedom of how epochs
and minima can be set. For $G_0$ and $\rho_0$ there are no choices: epoch and
the minimum date are set to $e_0 = (0, 3, 0)$. A match between the epoch and the minimum
implies that $\rho_0$ is non-negative. Since $e_0$'s year is zero, dates in
$G_0$ have non-negative years. The same holds for $G_1$, $\rho_1$ and $e_1 = (0,
3, 1)$. For $G_2$ and $\rho_2$, epoch and minimum also match and are set to $e_2
= (z_2, 3, 1)$, but $z_2$ can be any multiple of $400$. Therefore, $\rho_2$ is
also non-negative, although negative years are possible when $z_2 < 0$. Finally,
$\rho_3$ provides a more flexible setting where any epoch and minimum can be chosen.

As we shall see, $\rho_1 = \rho_0\circ P_1$. In general, maps from one calendar
to another provide a way to express rata die functions. Conversely, rata die
functions provide a way to map one calendar into another \cite{Hatcher1985,
ReingoldDershowitz2018, Richards1998}.

\begin{proposition}
\label{prp:Maps}
Let $C$ and $C'$ be calendars and let $\rho$ and $\rho'$ be their respective rata die
functions with epochs $e\in C$ and $e'\in C'$. If $P:C\rightarrow C'$ is a
strictly increasing bijection and $P(e) = e'$, then $\rho = \rho'\circ P$ and
$\rho^{-1} = P^{-1}\circ \rho'^{-1}$ on $Im(\rho') = Im(\rho)$. Conversely, if
$Im(\rho) = Im(\rho')$, then $\rho'^{-1}\circ \rho$ is a strictly increasing
bijection from $C$ to $C'$.
\end{proposition}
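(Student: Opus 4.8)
The plan is to reduce both parts to a single observation: a strictly increasing bijection between two lexicographically ordered calendars carries half-open intervals onto half-open intervals bijectively, and therefore preserves their cardinalities. Concretely, I would first establish the lemma that for any $a, b\in C$ with $a\le b$, the restriction of $P$ is a bijection from $[a, b[$ onto $[P(a), P(b)[$, so that $\#[a, b[\ = \#[P(a), P(b)[$. This holds because $x\in[a, b[$ is equivalent to $P(a)\le P(x) < P(b)$ (using that both $P$ and $P^{-1}$ are strictly increasing), while surjectivity onto $[P(a), P(b)[$ comes from applying $P^{-1}$ to any $y$ in that interval.

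With this lemma in hand, I would prove the identity $\rho = \rho'\circ P$ by splitting on the two clauses of the rata die definition. If $x\ge e$, then monotonicity and $P(e) = e'$ give $P(x)\ge e'$, so $\rho(x) = \#[e, x[\ = \#[e', P(x)[\ = \rho'(P(x))$; if $x < e$, then $P(x) < e'$ and the symmetric computation yields $\rho(x) = -\#[x, e[\ = -\#[P(x), e'[\ = \rho'(P(x))$. Taking images and using $P(C) = C'$ then gives $Im(\rho) = \rho'(C') = Im(\rho')$. The inverse formula follows formally: since $\rho$ and $\rho'$ are invertible on their common image (as recalled for rata die functions) and $P$ is invertible, $\rho^{-1} = (\rho'\circ P)^{-1} = P^{-1}\circ\rho'^{-1}$ on $Im(\rho) = Im(\rho')$.

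For the converse, assuming $Im(\rho) = Im(\rho')$, the composite $\rho'^{-1}\circ\rho$ is well-defined as a map $C\to C'$. I would argue it is a bijection because it composes the bijection $\rho\colon C\to Im(\rho)$ with the bijection $\rho'^{-1}\colon Im(\rho')\to C'$. It is strictly increasing because $\rho$ is strictly increasing and the inverse of the strictly increasing $\rho'$ is again strictly increasing: if $\rho'^{-1}(m)\ge\rho'^{-1}(n)$, then applying $\rho'$ forces $m\ge n$.

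I expect the only genuine subtlety to be the interval-preservation lemma, specifically checking surjectivity onto $[P(a), P(b)[$ through $P^{-1}$ and handling the signed $x < e$ clause consistently with the $x\ge e$ clause; once that is settled, everything downstream is routine manipulation of bijections and their inverses.
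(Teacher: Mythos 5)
Your proposal is correct and follows essentially the same route as the paper's proof: both establish $\rho = \rho'\circ P$ by noting that a strictly increasing bijection maps the interval $[e, x[$ (resp.\ $[x, e[$) bijectively onto $[e', P(x)[$ (resp.\ $[P(x), e'[$), deduce the inverse formula formally, and prove the converse by composing the bijections $\rho$ and $\rho'^{-1}$ and checking monotonicity. The only difference is one of explicitness---you spell out the interval-preservation lemma, the $x<e$ case, and the image equality, which the paper compresses into ``since $P$ is a strictly increasing bijection'' and ``treated analogously.''
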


\begin{proof}
Let $x\in C$ and assume that $x\ge e$. Since $P$ is a strictly increasing bijection,
we have $\rho(x) = \#[e, x[\ = \#P([e, x[) = \#[P(e), P(x)[) = \#[e', P(x)[\ =
\rho'(P(x))$. The case $x < e$ is treated analogously and we conclude that $\rho
= \rho'\circ P$. From this and well-known results on functions and their
inverses, we obtain $\rho^{-1} = P^{-1}\circ \rho'^{-1}$ on $Im(\rho') =
Im(\rho)$.

By definition, $\rho(x)\in Im(\rho) = Im(\rho')$ and $\rho'^{-1}(\rho(x))\in
C'$, in other words, $\rho'^{-1}\circ\rho$ is a map from $C$ to $C'$.
Since $\rho:C\rightarrow\Z$ and $\rho'^{-1}:Im(\rho')\rightarrow C'$
are strictly increasing surjective functions, hence bijections,
their composition is again a strictly increasing bijection.
\end{proof}

\subsection{The non-negative rata die with epoch 1 March 0000}
\label{sec:RataDie031}

This section concerns the rata die function $\rho_1$ on $G_1 = \{x_1\in G\ ;\
x_1\ge e_1 \}$ with epoch $e_1 = (0, 3, 1)$.

From \cref{rmk:P1} and \cref{prp:Maps} applied to $C = G_1$, $C' = G_0$, $\rho =
\rho_1$, $\rho' = \rho_0$, $e = e_1$, $e' = e_0$ and $P = P_1$ we obtain:
\begin{equation}
\label{eq:r1r0}
(y_1, m_1, d_1)\in G_1 \quad\Longrightarrow\quad
P_1(y_1, m_1, \rho_1)\in G_0 \quad\text{and}\quad
\rho_1(y_1, m_1, d_1) = \rho_0\left(P_1(y_1, m_1, d_1)\right)
\end{equation}
and, conversely,
\begin{equation}
\label{eq:r0r1}
(y_0, m_0, d_0)\in G_0 \quad\Longrightarrow\quad
P_1^{-1}(y_0, m_0, \rho_0)\in G_1 \quad\text{and}\quad
\rho_0(y_0, m_0, d_0) = \rho_1\left(P_1^{-1}(y_0, m_0, d_0)\right).
\end{equation}

\subsection{Rata die with a chosen epoch}
\label{sec:RataDieE}

Let $z_2\in\Z$ be a multiple of $400$ and $e_2 = (z_2, 3, 1)$. This section
concerns two rata die functions: $\rho_2$ on $G_2 = \{ x_2\in G\ ; \ x_2\ge e_2
\}$ with epoch $e_2$, and $\rho_3$ also on $G_2$ but with epoch $e_3\in G_2$
arbitrarily chosen.

Let $P_2:\Z^3\rightarrow\Z^3$ be the map $P_2(y_2, m_2, d_2) = (y_2 - z_2, m_2,
d_2)$, which is a strictly increasing
bijection with $P_2^{-1}(y_1, m_1, d_1) = (y_1 + z_2, m_1, d_1)$ and $P_2(e_2) =
(0, 3, 1) = e_1$.

Let $(y_2, m_2, d_2)\in G_2$ and set $(y_1, m_1, d_1) = P_2(y_2, m_2, d_2)$. We
have $m_1 = m_2\in[1, 12]$, $d_1 = d_2\in[1, L(y_2, m_2)]$ and, since $z_2\%400
= 0$, $y_2 - z_2$ is a leap year if, and only if, $y_2$ is a leap year. Hence,
$L(y_2, m_2) = L(y_2 - z_2, m_2) = L(y_1, m_1)$ and it follows that $d_1\in[1,
L(y_1, m_1)]$. Therefore, $(y_1, m_1, d_1)\in G$. From $(y_2, m_2, d_2)\ge e_2$,
we obtain $(y_1, m_1, d_1)\ge e_1$ and conclude that $(y_1, m_1, d_1)\in G_1$.
Similarly, we can show that $P_2^{-1}(y_1, m_1, d_1)\in G_2$ for all $(y_1, m_1,
d_1)\in G_1$ and thus, $P_2$ is a bijection from $G_2$ to $G_1$.

Applying \cref{prp:Maps} to $C = G_2$, $C' = G_1$, $\rho = \rho_2$, $\rho' =
\rho_1$, $e = e_2$, $e' = e_1$ and $P = P_2$ gives:
\begin{equation}
\label{eq:r2r1}
(y_2, m_2, d_2)\in G_2 \quad\Longrightarrow\quad
(y_2 - z_2, m_2, d_2)\in G_1 \quad\text{and}\quad
\rho_2(y_2, m_2, d_2) = \rho_1(y_2 - z_2, m_2, d_2)
\end{equation}
and, reciprocally,
\begin{equation}
\label{eq:r1r2}
(y_1, m_1, d_1)\in G_1 \quad\Longrightarrow\quad
(y_1 + z_2, m_1, d_1)\in G_2 \quad\text{and}\quad
\rho_1(y_1, m_1, d_1) = \rho_2(y_1 + z_2, m_1, d_1).
\end{equation}

Fix $e_3\in G_2$ and let $\rho_3:G_2\rightarrow\Z$ be the rata die function with
epoch $e_3$. Let $x_2\in G_2$. On the one hand, if $x_2\ge e_3$, then $\#[e_2, x_2[
\ = \#[e_2, e_3[\ +\ \#[e_3, x_2[$, in other words, $\rho_2(x_2) = \rho_2(e_3) + \rho_3
(x_2)$. On the other hand, if $x_2 < e_3$, then $\#[e_2, e_3[\ = \#[e_2, x_2[\ +\
\#[x_2, e_3[$, in other words, $\rho_2(e_3) = \rho_2(x_2) - \rho_3(x_2)$. Therefore,
\begin{equation}
\label{eq:r3r2}
x_2\in G_2 \quad\Longrightarrow\quad \rho_3(x_2) = \rho_2(x_2) - \rho_2(e_3).
\end{equation}

The next two propositions prove the correctness of our Gregorian calendar
algorithms.  \cref{prp:RataDieGregorian} considers the calculation of the rata
die function for any chosen epoch and \cref{prp:RataDieInverseGregorian}
considers its inverse.

\begin{proposition}
\label{prp:RataDieGregorian}
Let $\rho_2$, $\rho_3$ be rata die functions on $G_2$ with epochs $e_2 = (z_2,
3, 1)$ and $e_3$, respectively, where $z_2\in\Z$ is a multiple of $400$. Given
$(y_2, m_2, d_2)\in G_2$, set:
\begin{equation*}
\begin{aligned}
y_1 &= y_2 - z_2, \\
m_1 &= m_2, \\
d_1 &= d_2, \\
\end{aligned}
\quad\vrule\quad
\begin{aligned}
y_0  &= y_1 - {\bf 1}_{\{m_1\le 2\}}, \\
m_0 &= m_1 + 12\cdot{\bf 1}_{\{m_1\le 2\}}, \\
d_0 &= d_1 - 1,
\end{aligned}
\quad\vrule\quad
\begin{aligned}
q_1 &= y_0/100, \\
y_c &= 1461\cdot y_0/4 - q_1 + q_1/4, \\
m_c &= (979\cdot m_0 - 2919)/2^5, \\
d_c &= d_0.
\end{aligned}
\end{equation*}
Then, $\rho_3(y_2, m_2, d_2) = y_c + m_c + d_c - \rho_2(e_3)$.
\end{proposition}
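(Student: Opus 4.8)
The plan is to follow the chain of calendars $G_2 \to G_1 \to G_0$ that the three blocks of assignments implement, composing the transfer relations already established for each arrow. First I would observe that the leftmost block is nothing but the application of $P_2$: it sets $(y_1, m_1, d_1) = (y_2 - z_2, m_2, d_2)$. Since $(y_2, m_2, d_2) \in G_2$, \cref{eq:r2r1} immediately supplies both the membership $(y_1, m_1, d_1) \in G_1$ and the equality $\rho_2(y_2, m_2, d_2) = \rho_1(y_1, m_1, d_1)$, so no separate verification of membership is required.

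Next I would recognise the middle block as the application of $P_1$: comparing it term by term with the defining formula for $P_1$, the triple $(y_0, m_0, d_0)$ is precisely $P_1(y_1, m_1, d_1)$. Because $(y_1, m_1, d_1) \in G_1$ from the previous step, \cref{eq:r1r0} then yields $(y_0, m_0, d_0) = P_1(y_1, m_1, d_1) \in G_0$ together with $\rho_1(y_1, m_1, d_1) = \rho_0(y_0, m_0, d_0)$. With $(y_0, m_0, d_0) \in G_0$ now in hand, the rightmost block is verbatim the set of assignments appearing in \cref{cor:RataDie}, so that corollary gives $\rho_0(y_0, m_0, d_0) = y_c + m_c + d_c$. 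Stringing the three equalities together produces $\rho_2(y_2, m_2, d_2) = y_c + m_c + d_c$.

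Finally, to pass from $\rho_2$ to the target $\rho_3$, I would invoke \cref{eq:r3r2} (applicable since $(y_2, m_2, d_2) \in G_2$), which states $\rho_3(y_2, m_2, d_2) = \rho_2(y_2, m_2, d_2) - \rho_2(e_3)$; substituting the expression just obtained for $\rho_2(y_2, m_2, d_2)$ completes the argument. There is essentially no hard step here: the proof is a bookkeeping composition of \cref{eq:r2r1,eq:r1r0,cor:RataDie,eq:r3r2}. The only points demanding a moment's care are confirming that the middle block coincides with $P_1$ and that the rightmost block coincides verbatim with the assignments of \cref{cor:RataDie}; the intermediate memberships in $G_1$ and $G_0$ cause no difficulty, as they are delivered automatically by \cref{eq:r2r1,eq:r1r0} rather than needing to be checked by hand.
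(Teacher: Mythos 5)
Your proposal is correct and follows essentially the same route as the paper's own proof: the chain $G_2 \to G_1 \to G_0$ via \cref{eq:r2r1} and \cref{eq:r1r0}, then \cref{cor:RataDie} for the rightmost block, and finally \cref{eq:r3r2} to shift the epoch from $e_2$ to $e_3$. Nothing is missing; the paper's proof is the same bookkeeping composition, stated in the same order.
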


\begin{proof}
The first column and \cref{eq:r2r1} give $(y_1, m_1,
d_1)\in G_1$ and $\rho_2(y_2, m_2, d_2) = \rho_1(y_1, m_1, d_1)$. The second column
gives $(y_0, m_0, d_0) = P_1(y_1, m_1, d_1)$ and \cref{eq:r1r0} yields $(y_0,
m_0, d_0) \in G_0$ and $\rho_1(y_1, m_1, d_1) = \rho_0(y_0, m_0, d_0)$. Hence,
$\rho_2(y_2, m_2, d_2) = \rho_0(y_0, m_0, d_0)$, and from \cref{cor:RataDie} we
obtain $\rho_2(y_2, m_2, d_2) = y_c + m_c + d_c$. The result follows from
\cref{eq:r3r2}.
\end{proof}

\begin{proposition}
\label{prp:RataDieInverseGregorian}
Let $\rho_2$, $\rho_3$ be rata die functions on $G_2$ with epochs $e_2 = (z_2,
3, 1)$ and $e_3\ge e_2$, respectively, where $z_2\in\Z$ is a multiple of $400$.
Given $r\in\Z$ with $r\ge -\rho_2(e_3)$, set $r_0 = r + \rho_2(e_3)$ and:
\begin{equation*}
\begin{aligned}
n_1 &= 4\cdot r_0 + 3, \\
\\
q_1 &= n_1/146097, \\
r_1 &= n_1\%146097/4, \\
\end{aligned}
\quad\vrule\quad
\begin{aligned}
n_2 &= 4\cdot r_1 + 3, \\
u_2 &= 2\s939\s745\cdot n_2, \\
q_2 &= u_2/2^{32}, \\
r_2 &= u_2\%2^{32}/2\s939\s745/4, \\
\end{aligned}
\quad\vrule\quad
\begin{aligned}
n_3 &= 2141\cdot r_2 + 197913, \\
\\
q_3 &= n_3/2^{16}, \\
r_3 &= n_3\%2^{16}/2141. \\
\end{aligned}
\end{equation*}
and:
\begin{equation}
\label{eq:Bottom}
\begin{aligned}
y_0 &= 100\cdot q_1 + q_2, \\
m_0 &= q_3, \\
d_0 &= r_3, \\
\end{aligned}
\quad\vrule\quad
\begin{aligned}
y_1 &= y_0 + {\bf 1}_{\{r_2\ge 306\}}, \\
m_1 &= m_0 - 12\cdot{\bf 1}_{\{r_2\ge 306\}}, \\
d_1 &= d_0 + 1. \\
\end{aligned}
\quad\vrule\quad
\begin{aligned}
y_2 &= y_1 + z_2, \\
m_2 &= m_1, \\
d_2 &= d_1.
\end{aligned}
\end{equation}
Then, $\rho_3(y_2, m_2, d_2) = r$.
\end{proposition}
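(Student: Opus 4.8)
The plan is to reduce everything to the identity $\rho_3(x_2) = \rho_2(x_2) - \rho_2(e_3)$ of \cref{eq:r3r2}: once we show that the reconstructed triple $(y_2, m_2, d_2)$ lies in $G_2$ and satisfies $\rho_2(y_2, m_2, d_2) = r_0$, the desired conclusion $\rho_3(y_2, m_2, d_2) = r_0 - \rho_2(e_3) = r$ follows at once from the definition $r_0 = r + \rho_2(e_3)$. First I would record that $r_0 \ge 0$: since $e_3 \ge e_2$ we have $\rho_2(e_3) = \#[e_2, e_3[\ \ge 0$, and the hypothesis $r \ge -\rho_2(e_3)$ gives $r_0 = r + \rho_2(e_3) \ge 0$. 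As noted at the start of \cref{sec:RataDieInv}, $\rho_0$ maps $G_0$ bijectively onto $\Zp$, so there is a unique $x_0 = (y_0^*, m_0^*, d_0^*) \in G_0$ with $\rho_0(x_0) = r_0$. The whole argument then consists in showing that the algorithm recovers $x_0$ and climbs back to $G_2$ along the inverses of $P_1$ and $P_2$.

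For the recovery of $x_0$, I would feed the three columns of the statement into \cref{prp:Century}, \cref{cor:Year} and \cref{cor:Month} in turn. Applying \cref{prp:Century} to $x_0$ and its rata die $r_0$ identifies $q_1$ as the century $y_0^*/100$ and $r_1$ as the day of the century of $x_0$. With $r_1$ now known to be the day of the century, \cref{cor:Year} applies (its proof verifies the needed bound $n_2 \in [3, 146100[\ \subset [0, 28\s825\s529[$) and identifies $q_2$ as the year of the century $y_0^*\%100$ and $r_2$ as the day of the year. Feeding this $r_2$ into \cref{cor:Month} (whose proof checks $0 \le r_2 \le 365$, hence $r_2 \in [0, 734[$) identifies $q_3$ as the month $m_0^*$ and $r_3$ as the day $d_0^*$. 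Consequently $y_0 = 100\cdot q_1 + q_2 = 100\cdot(y_0^*/100) + y_0^*\%100 = y_0^*$, while $m_0 = q_3 = m_0^*$ and $d_0 = r_3 = d_0^*$, so the first block of \cref{eq:Bottom} reproduces $(y_0, m_0, d_0) = x_0 \in G_0$.

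It remains to climb back, and the point I expect to be the main obstacle is that the second block of \cref{eq:Bottom} uses the indicator ${\bf 1}_{\{r_2 \ge 306\}}$, whereas $P_1^{-1}$ in \cref{eq:P1Inverse} uses ${\bf 1}_{\{m_0 \ge 13\}}$. These agree by \cref{rmk:JanOrFeb}, which states precisely that for a date of $G_0$ with day of the year $r_2$ and month $m_0$ one has $m_0 \ge 13$ if, and only if, $r_2 \ge 306$; this guarantees that the second block computes exactly $(y_1, m_1, d_1) = P_1^{-1}(y_0, m_0, d_0)$. Then \cref{eq:r0r1} gives $(y_1, m_1, d_1) \in G_1$ with $\rho_0(y_0, m_0, d_0) = \rho_1(y_1, m_1, d_1)$, and the third block computes $(y_2, m_2, d_2) = (y_1 + z_2, m_1, d_1)$, so \cref{eq:r1r2} gives $(y_2, m_2, d_2) \in G_2$ with $\rho_1(y_1, m_1, d_1) = \rho_2(y_2, m_2, d_2)$. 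Chaining these equalities with $\rho_0(x_0) = r_0$ yields $\rho_2(y_2, m_2, d_2) = r_0$, and \cref{eq:r3r2} closes the proof as described. Aside from the indicator matching, the only care needed is to confirm that the range hypotheses of the two corollaries are met; but these are established inside their own proofs and are automatically satisfied here, since the inputs are the genuine day-of-century and day-of-year of an actual date $x_0 \in G_0$.
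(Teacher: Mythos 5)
Your proposal is correct and follows essentially the same route as the paper's proof: obtain $x_0\in G_0$ with $\rho_0(x_0)=r_0$ from $r_0\ge 0$, identify the computed quantities via \cref{prp:Century}, \cref{cor:Year} and \cref{cor:Month}, use \cref{rmk:JanOrFeb} to replace ${\bf 1}_{\{r_2\ge 306\}}$ by ${\bf 1}_{\{m_0\ge 13\}}$ so the second and third blocks implement $P_1^{-1}$ and $P_2^{-1}$, and conclude with \cref{eq:r3r2}. Your explicit checks (the reconstruction $y_0 = 100\cdot(y_0^*/100)+y_0^*\%100 = y_0^*$ and the range verifications inside the two corollaries) are details the paper leaves implicit, but the argument is the same.
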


\begin{proof}
Since $r\ge -\rho_2(e_3)$, we have $r_0 = r + \rho_2(e_3)\ge 0$ and thus
$x_0\in G_0$ exists such that $\rho(x_0) = r_0$.

\Cref{prp:Century} provides that $q_1$ is the century of $x_0$ and that $r_1$ is its day
of the century. \cref{cor:Year} then yields that $q_2$ is the year of the century of
$x_0$ and $r_2$ is its day of the year, while \cref{cor:Month} states that $q_3$ is the
month of $x_0$ and $r_3$ is its day.

Hence, the first column of \cref{eq:Bottom} reads $x_0 = (y_0, m_0, d_0)$. From
\cref{rmk:JanOrFeb} we can substitute ${\bf 1}_{\{r_2\ge 306\}}$ with ${\bf 1}_
{\{m_0\ge 13\}}$ and then, by \cref{eq:P1Inverse} the second column of
\cref{eq:Bottom}
becomes $(y_1, m_1, d_1) = P_1^{-1}(x_0)$ and \cref{eq:r0r1} gives $(y_1, m_1,
d_1)\in G_1$ and $\rho_1(y_1, m_1, d_1) = \rho_0(x_0)$. Finally, \cref{eq:r1r2}
and the third column of \cref{eq:Bottom} yield $(y_2, m_2, d_2)\in G_2$ and $\rho_2(y_2, m_2,
d_2) = \rho_1(y_1, m_1, d_1) = \rho_0(x_0) = r_0 = r + \rho_2(e_3)$.

We have shown that $r = \rho_2(y_2, m_2, d_2) - \rho_2(e_3)$. Therefore,
\cref{eq:r3r2} gives $\rho_3(y_2, m_2, d_2) = r$.
\end{proof}

\section{Performance analysis}
\label{sec:Performance}

We benchmarked our algorithms from \cref{prp:RataDieGregorian,%
prp:RataDieInverseGregorian} against counterparts in five of the most widely used C,
C++, C\# and Java libraries, as listed below:

\begin{tabular}{p{.1\textwidth}p{.84\textwidth}}
{\bf glibc} & The GNU C Library \cite{GlibcMktime, GlibcOfftime}. (The Linux
Kernel contains a similar implementation \cite{LinuxTimeconv}.)
\\
{\bf Boost} & The Boost C++ libraries \cite{BoostGregorianCalendar}.
\\
{\bf libc++} & LLVM's implementation of the C++ Standard Library
\cite{LLVMChrono}.
\\
{\bf .NET} & Microsoft .NET framework \cite{DotNetDateTime}.
\\
{\bf OpenJDK} & Oracle's open source implementation of the Java Platform SE
\cite{OpenJDKLocalDate}. (Android uses the same code \cite{AndroidLocalDate}.)
\end{tabular}

We used source files as publicly available on 2 May 2020. Non-C++
implementations have been ported to this language and have all been slightly
modified to achieve consistent (a) function signatures; (b) storage types (for
years, months, days and rata dies); and (c) epoch (Unix epoch, {\em i.e.}, 1
January 1970). Some originals deal with date and time but our variants work on
dates only. (Given the uniform durations of days, hours, minutes and seconds,
it would be trivial to incorporate the time component to any dates-only algorithm.
Moreover, \Cref{ex:FastTime} suggests improvements that are not used by major
implementations.)

We did not include Microsoft's C++ Standard Library because in May 2020 it did not yet
implement these functionalities. For the same reason, libstdc++, the
GNU implementation of the C++ Standard Library, is also absent. Furthermore, a
forthcoming release of libstdc++ is expected to implement our algorithms.

We also considered our own implementations of algorithms described in the academic
literature, namely, Baum \cite{Baum1998}, Fliegel and Flandern
\cite{FliegelFlandern1968}, Hatcher \cite{Hatcher1984, Hatcher1985,
Richards1998} and Reingold and Dershovitz \cite{ReingoldDershowitz2018}.

Timings were obtained with the help of the Google Benchmark library
\cite{GoogleBenchmark}, to which we delegated the task of producing
statistically relevant results. The code was run on an Intel i7-10510U CPU at
$4.9$ GHz. Source code, available at \cite{NeriSchneider2020_code}, was compiled
by GCC 10.2.0 at optimisation level \texttt{-O3}. Results vary across platforms
but maintain qualitative consistency between them.

The table in \cref{fig:RataDie} shows the time taken by each algorithm to
evaluate $\rho_3$ at $16\s384$ pseudo-random dates, uniformly distributed in
$[(1570, 1, 1), (2370, 1, 1)[$ ({\em i.e.,} Unix epoch $\pm 400$ years). They
encompass the time spent scanning the array of dates (also shown). Subtracting
the scanning time from that of each algorithm gives a fairer account of the time spent
by the algorithm itself. The chart plots these adjusted timings relative to
ours (\cref{prp:RataDieGregorian}).

\begin{figure}[ht]
\small
\includegraphics[width=.678\textwidth]{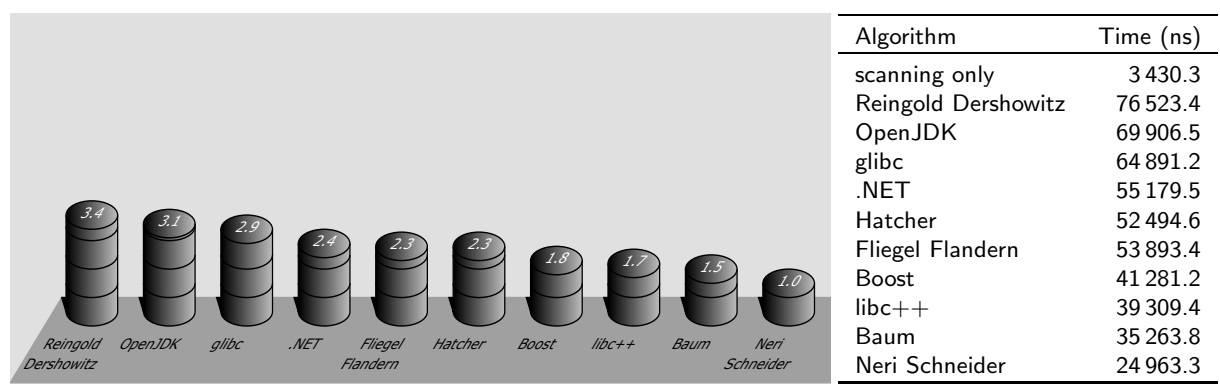}
\begin{tabular}[b]{lr}
\hline\noalign{\smallskip}
Algorithm           & Time (ns) \\
\hline\noalign{\smallskip}
scanning only       &  3\s430.3 \\
Reingold Dershowitz & 76\s523.4 \\
OpenJDK             & 69\s906.5 \\
glibc               & 64\s891.2 \\
.NET                & 55\s179.5 \\
Hatcher             & 52\s494.6 \\
Fliegel Flandern    & 53\s893.4 \\
Boost               & 41\s281.2 \\
libc++              & 39\s309.4 \\
Baum                & 35\s263.8 \\
Neri Schneider      & 24\s963.3 \\
\hline
\end{tabular}%
\caption{Relative and absolute timings of rata die evaluations.}
\label{fig:RataDie}
\end{figure}

Similarly, the table in \cref{fig:RataDieInverse} shows the time taken by each
algorithm to evaluate $\rho_3^{-1}$ at $16384$ pseudo-random integer numbers
uniformly distributed in $[-146097, 146097[$ (again, Unix epoch $\pm 400$ years
as per \cref{rmk:LeapCycles}). The chart displays adjusted times relative to
ours (\cref{prp:RataDieInverseGregorian}).

\begin{figure}[ht]
\small
\includegraphics[width=.678\textwidth]{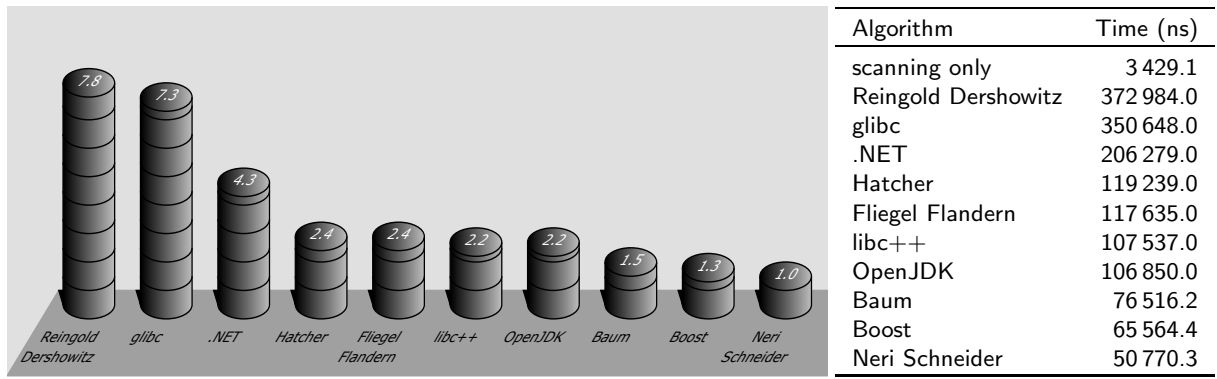}
\begin{tabular}[b]{lr}
\hline\noalign{\smallskip}
Algorithm           & Time (ns) \\
\hline\noalign{\smallskip}
scanning only       &   3\s429.1 \\
Reingold Dershowitz & 372\s984.0 \\
glibc               & 350\s648.0 \\
.NET                & 206\s279.0 \\
Hatcher             & 119\s239.0 \\
Fliegel Flandern    & 117\s635.0 \\
libc++              & 107\s537.0 \\
OpenJDK             & 106\s850.0 \\
Baum                &  76\s516.2 \\
Boost               &  65\s564.4 \\
Neri Schneider      &  50\s770.3 \\
\hline
\end{tabular}%
\caption{Relative and absolute timings of rata die inverse evaluations.}%
\label{fig:RataDieInverse}
\end{figure}

\bibliographystyle{plain}
\bibliography{../bibtex/articles,../bibtex/books,../bibtex/websites}

\section*{Disclaimer}

{\scriptsize
Opinions and estimates constitute our judgement as of the date of this Material,
are for informational purposes only and are subject to change without notice.
This Material is not the product of J.P. Morgan's Research Department and
therefore, has not been prepared in accordance with legal requirements to
promote the independence of research, including but not limited to, the
prohibition on the dealing ahead of the dissemination of investment research.
This Material is not intended as research, a recommendation, advice, offer or
solicitation for the purchase or sale of any financial product or service, or to
be used in any way for evaluating the merits of participating in any
transaction. It is not a research report and is not intended as such. Past
performance is not indicative of future results. Please consult your own
advisors regarding legal, tax, accounting or any other aspects including
suitability implications for your particular circumstances. J.P. Morgan
disclaims any responsibility or liability whatsoever for the quality, accuracy
or completeness of the information herein, and for any reliance on, or use of
this material in any way. Important disclosures at: www.jpmorgan.com/disclosures
}

\end{document}